\documentclass[journal]{IEEEtran}
\IEEEoverridecommandlockouts

\usepackage{mypackage}

\begin{document}

\title{End-to-End Secure Connection Probability in Multi-\\Layer Networks with Heterogeneous Rician Fading 
}

\author{
Hyeonsu Lyu\IEEEauthorrefmark{1},~\IEEEmembership{Student Member,~IEEE}, Yumin Kim\IEEEauthorrefmark{1},~\IEEEmembership{Student Member,~IEEE},\newline
Hyun Jong Yang,~\IEEEmembership{Senior Member,~IEEE}
\thanks{
Hyeonsu Lyu is with the Department of Electrical Engineering, POSTECH, Korea (email: hslyu4@postech.ac.kr).
Yumin Kim and Hyun Jong Yang are with the Department of Electrical and Computer Engineering, Seoul National University, Korea (email: \{yumin0107, hjyang\}@snu.ac.kr).
Hyun Jong Yang is also with the Institute of New Media and Communications, Seoul National University, Seoul, South Korea.

\IEEEauthorrefmark{1} Hyeonsu Lyu and Yumin Kim contributed equally to this work.
}
}

\maketitle

\begin{abstract}
Ensuring physical-layer security in non-terrestrial networks (NTNs) is challenging due to their global coverage and multi-hop relaying across heterogeneous network layers, where the locations and channels of potential eavesdroppers are typically unknown.
In this work, we derive a tractable closed-form expression of the end-to-end secure connection probability (SCP) of multi-hop relay routes under heterogeneous Rician fading.
The resulting formula shares the same functional form as prior Rayleigh-based approximations but for the coefficients, thereby providing analytical support for the effectiveness of heuristic posterior coefficient calibration adopted in prior work.
Numerical experiments under various conditions show that the proposed scheme estimates the SCP with an 1\%p error in most cases; and doubles the accuracy compared with the conventional scheme even in the worst case.
As a case study, we apply the proposed framework to real-world space-air-ground-sea integrated network dataset, showing that the derived SCP accurately captures observed security trends in practical settings.
\end{abstract}
\begin{IEEEkeywords}
Space-air-ground-sea integrated networks, non-terrestrial networks, multi-hop relaying, stochastic geometry.
\end{IEEEkeywords}

\section{Introduction}
As non-terrestrial networks (NTNs) evolve into a key enabler of future wireless systems, ensuring physical-layer security (PLS) of NTNs is becoming increasingly critical \cite{li2019physical}.
In contrast to conventional terrestrial deployments, NTNs inherently extend the communication scale from local to global, so that each transmission covers a vast footprint and thereby dramatically enlarges the attack surface \cite{Salim25-SurvTut}.
In particular, multi-hop relaying across heterogeneous network layers, such as space, air, ground, and sea layers, can expose individual links to diverse classes of potential eavesdroppers (Eves) scattered throughout the networks.
Such cross-layer vulnerabilities are fundamentally different from those in terrestrial networks, leaving significant gaps in the existing literature on PLS for multi-layer NTN architectures \cite{Guo22-SurvTut}.

PLS in multi-layer NTNs becomes challenging 
as the secure metric of a multi-hop relay chain needs to be analyzed from an end-to-end perspective \cite{MA22-TCOM}.
User data is forwarded through successive relays, so that the overall secrecy performance is jointly determined by the bottleneck among the legitimate links and the strongest eavesdropping link along the entire path.
Moreover, NTNs' extensive communication range makes it practically infeasible to accurately characterize the channels and locations of all potential Eves.
In this context, the secure connection probability (SCP) under hidden Eves has gained increasing attention in the PLS of NTNs \cite{Li25-TAero}.

\begin{table}[tb]
    \centering
    \vspace{-0.3cm}
    \caption{Comparisons of SCP Approximations}
    \label{tab:related_works}
{\setlength{\tabcolsep}{5pt}
    \begin{tabular}{cccccc}
        \toprule[.75pt]
        Ref. & Arch. & Hop & Fading & Distribution & Post-calib.\\
        \toprule[.35pt]
        \cite{Yao16-TCOM} & \textcolor{lightergray}{Ground} & Multi & Rayleigh & \textcolor{lightergray}{Exponential} & \mygraycheck \\
        \cite{lyu2025-SSIR} & NTN & \textcolor{lightergray}{Single} & Rayleigh &  \textcolor{lightergray}{Exponential} & \mygraycheck \\
        \cite{Lyu25-ICTC} & NTN & Multi & Rayleigh & \textcolor{lightergray}{Erlang} & \mygraycheck \\
        \bottomrule[.35pt]
        \addlinespace[.06cm]
        Ours & NTN & Multi & \makecell{Hetero-Rician} & Non-central-$\chi^2$ & \Xmark \\
        \bottomrule[.75pt]
    \end{tabular}
}
\vspace{-0.5cm}
\end{table}    

Research efforts to analyze the SCP of multi-hop links have been mainly reported in \cite{Yao16-TCOM, lyu2025-SSIR, Lyu25-ICTC}, but these approaches still suffer from several critical limitations.
First, \cite{Yao16-TCOM} is the first attempt to define the SCP over end-to-end relaying in terrestrial networks.
Subsequently, \cite{lyu2025-SSIR} introduces the SCP concept in NTNs, but restricts the SCP to a single-link and thus fails to capture multi-hop topologies.
\cite{Lyu25-ICTC} shows that SCP can be derived for multi-hop NTNs. 
However, these works assume Rayleigh fading for all links in the network,
which is unsuitable for multi-layer NTNs with heterogeneous base stations.
Moreover, these works conduct heuristic posterior calibration to adjust the model coefficient to compensate for their large approximation gap,
but no rigorous justification has been provided for this calibration.

\begin{figure}[!b]
    \vspace{-0.4cm}
    \centering
    \includegraphics[width=1\linewidth]{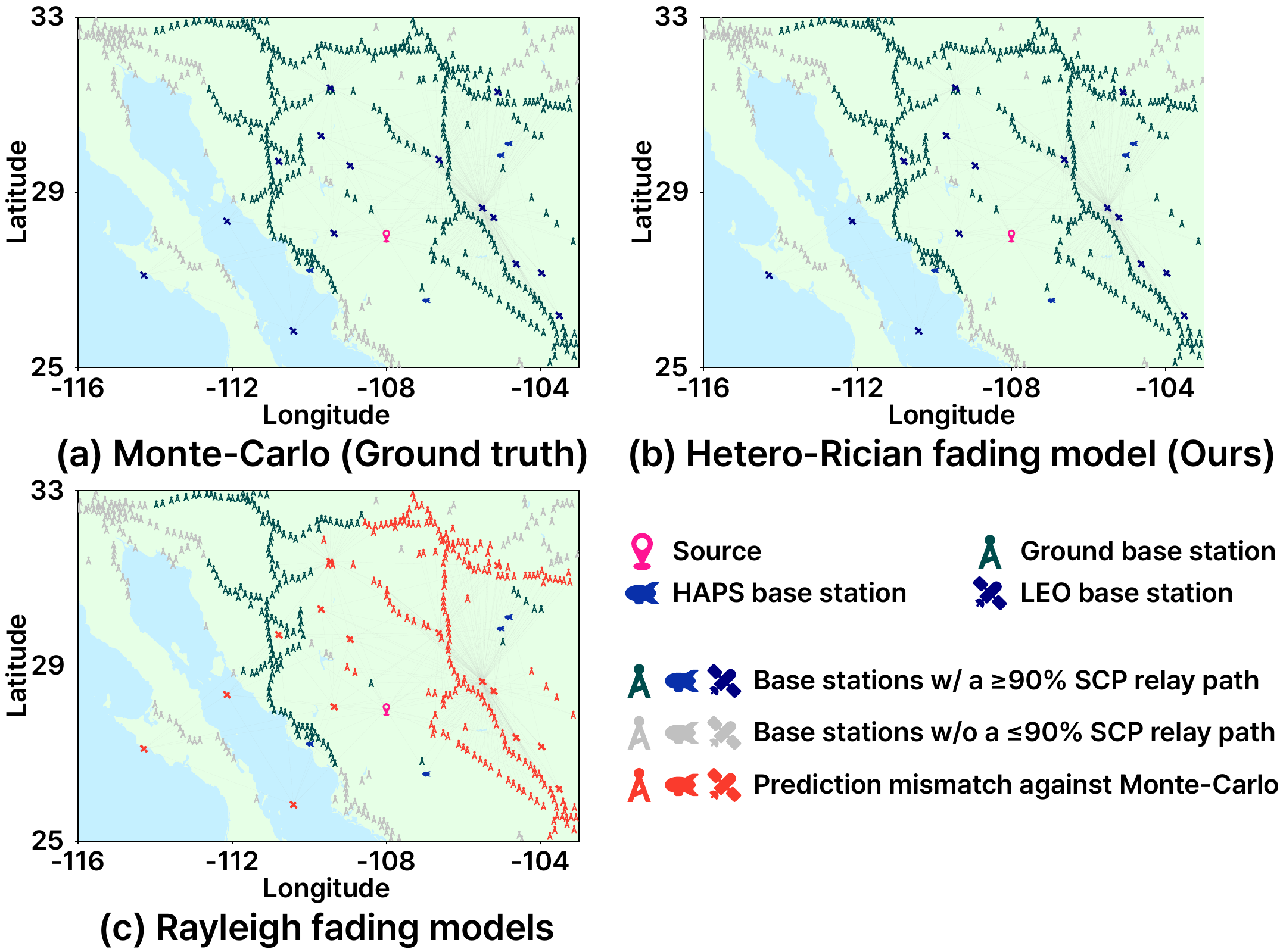}
    \vspace{-0.6cm}
    \caption{Visualization of the proposed scheme with baselines in the North America space-air-ground integrated network testbed.}
    \label{fig:visualization}
\end{figure}

This work proposes a closed-form SCP expression in the multi-layer NTNs with heterogeneous Rician channels without relying on posterior calibration when Eves are unknown.
We find that the resulting formula has the same form as the previous work \cite{Yao16-TCOM}, but with a different coefficient.
This similarity provides a theoretical explanation for why posterior calibration works effectively.
As illustrated in Fig.~\ref{fig:visualization}, the proposed model aligns much more closely with the ground truth than the previous model.
Consequently, it accurately predicts the base stations whose paths achieve an SCP exceeding 90\%, correcting the discrepancy observed in the prior work.\footnote{A detailed explanation and the additional results with an SCP threshold 99\% are provided in Appendix~\ref{section:Visualization results with a 99 SCP threshold}.}
In summary, our main contributions are \textbf{(i)} deriving a closed-form SCP expression for heterogeneous Rician channels, which was previously intractable; \textbf{(ii)} validating its accuracy in the real-world testbed; and \textbf{(iii)} establishing a theoretical foundation for the heuristic calibrations adopted by prior work.

\section{System Model and Formulation}
\subsection{System Model}
We consider a multi-hop relaying scenario in heterogeneous multi-layer NTNs and adopt decode-and-forward (DF) relaying, a standard and practical choice for NTN relay links~\cite{Bhatnagar13-CL}.
Eves are assumed to be passive, not transmitting any signals to avoid exposing their location and channel.
We thus model that Eves are independently distributed according to a homogeneous Poisson point process (HPPP) in each network layer.
This work aims to find a closed-form SCP expression using stochastic geometry in the above scenario. 

Let the index set of layers be $\mathcal{L} = \{1,\ldots,L\}$.
The sets of legitimate nodes and Eves in layer~$l$ are $\mathcal{I}_l$ and $\mathcal{M}_l$, and the overall sets are $\mathcal{I} = \bigcup_{l\in\mathcal{L}} \mathcal{I}_l$ and $\mathcal{M} = \bigcup_{l\in\mathcal{L}} \mathcal{M}_l$.
We focus on an arbitrary layer $l\in\mathcal{L}$ without loss of generality.
Suppose there are $I_l$ hops in layer~$l$; the $i$-th hop consists of the link between transmitter node~$i$ and receiver node~$i+1$ for $i \in \{0,\ldots,I_l-1\}$.
When node~$i$ transmits a signal, not only node~$i+1$ receives it, but also an Eve~$m\in\mathcal{M}_l$ may attempt to wiretap it.
Both channels experience Rician fading, and we denote the legitimate and eavesdropping channel coefficients by $h^{\mathsf{s}}_i$ and $h^{\mathsf{e}}_{(i,m)}$, respectively.
The signal-to-noise ratios (SNRs) of the legitimate and wiretap links associated with the transmission from node~$i$ are given by
\begin{align}
    \mathrm{SNR}^{\mathsf{s}}_{i} = \frac{P_i |h^{\mathsf{s}}_{i}|^2}{n_0 \big(d^{\mathsf{s}}_{i}\big)^{\alpha_i}},
    ~
    \mathrm{SNR}^{\mathsf{e}}_{(i, m)} = \frac{P_i |h^{\mathsf{e}}_{(i, m)}|^2}{n_0 \big(d^{\mathsf{e}}_{(i, m)}\big)^{\alpha_i}},
\end{align}
where $d^{\mathsf{s}}_i$ is the distance from node~$i$ to node~$i+1$, $d^{\mathsf{e}}_{(i,m)}$ is the distance from node~$i$ to Eve~$m$, $n_0$ is the noise power, $P_i$ is the transmit power of node~$i$, and $\alpha_i$ is the path-loss exponent.
Nodes in the same network layer are assumed to have the same path-loss exponent and transmit power, as they are situated in a similar transmitting terminal and wireless environment.

We define the secrecy capacity of layer~$l$ as the difference between the minimum legitimate-link rate and the maximum eavesdropping-link rate along the multi-hop path in layer~$l$, which is given by 
\begin{align}
    \hspace{-0.15cm}\xi_l \hspace{-0.05cm}=\hspace{-0.1cm}
    \Big[\hspace{-0.05cm} \log_2\hspace{-0.05cm}\big(1 + \hspace{-0.05cm}\min_{i \in \mathcal{I}_l}\hspace{-0.05cm} \mathrm{SNR}^{\mathsf{s}}_{i}\big)
    \hspace{-0.05cm}-\hspace{-0.05cm}
    \log_2\hspace{-0.05cm}\big(1 +\hspace{-0.05cm} \max_{m \in \mathcal{M}_l} \mathrm{SNR}^{\mathsf{e}}_{m}\big)\hspace{-0.05cm} \Big]^+\!\!\!,
    \label{eq:secrecy_capacity}
\end{align}
where $[x]^+=\max(x,0)$ and the combined SNR at Eve~$m$ is given by $\mathrm{SNR}^{\mathsf{e}}_m = \sum_{i \in \mathcal{I}_l} \mathrm{SNR}^{\mathsf{e}}_{(i,m)}$, which is obtained via maximal-ratio combining \cite{Brennan03-MRC}.

\subsection{Closed-form Derivation of SCP}
We define the strictly positive SCP for a multi-hop relay network across $L$ layers as the probability that all layers achieve positive secrecy capacity \cite{Jameel19-SurvTut}, which is
\begin{align}
    \mathcal{P} &=
    \mathcal{P}
    \left(
        \xi_1>0, \xi_2>0, \ldots, \xi_L>0
    \right)
    =\prod_{l\in\mathcal{L}} \mathcal{P}_l,
    \label{eq:SCP}
\end{align}
where $\mathcal{P}_l=\mathcal{P}(\min_{i \in \mathcal{I}_l}\mathrm{SNR}^{\mathsf{s}}_{i}>\max_{m\in\mathcal{M}_l}\SNR{e}_{m})$.
We now derive a closed-form expression for $\mathcal{P}_l$, which is
\begin{align}
    &\mathcal{P}_l
    =\mathbb{E}_{|h^\mathsf{s}_i|, \mathcal{M}_l}
    \Big[
        \prod_{m\in\mathcal{M}_l}\mathcal{P}\big(\epsilon_l > \SNR{e}_m \big| h^\mathsf{s}_i, \mathcal{M}_l \big)
    \Big]
    \label{eq:SCP_layer_1} \\
    &=\mathbb{E}_{|h^\mathsf{s}_i|}
    \bigg[
        \hspace{-.05cm}\exp\hspace{-.05cm}\Big[
            -\lambda_l\int_{\mathbb{R}^2}
            \mathcal{P}\big(\epsilon_l<\SNR{e}_m \big|h^\mathsf{s}_i\big)\hspace{-.1cm} ~d\bm{p}_m
        \Big]
    \bigg]
    \label{eq:SCP_layer_2} \\
    &=\mathbb{E}_{|h^\mathsf{s}_i|}
    \bigg[
        \hspace{-.05cm}\exp\hspace{-.05cm}\Big[
            \hspace{-.05cm}-\hspace{-.05cm}\lambda_l\hspace{-.1cm}\int_{\mathbb{R}^2}\hspace{-.1cm}\mathcal{P}
            \Big(\hspace{-.05cm}
                \epsilon_l<
                \sum_{i\in\mathcal{I}_l} \frac{P_i|h^{\mathsf{e}}_{(i, m)}|^2}{n_0(d^{\mathsf{e}}_m)^{\alpha_i}}
            \Big)\hspace{-.1cm} ~d\bm{p}_m
        \Big]
    \bigg] \!\!
    \label{eq:SCP_layer_3} \\
    &=\mathbb{E}_{|h^\mathsf{s}_i|}
    \bigg[
        \hspace{-.05cm}\exp\hspace{-.05cm}\Big[
            \hspace{-.05cm}-\hspace{-.05cm}\lambda_l\hspace{-.1cm}\int_{\mathbb{R}^2}\hspace{-.1cm}\mathcal{P}
            \Big(\hspace{-.05cm}
                \frac{n_0 (d^{\mathsf{e}}_m)^{\alpha_l} \epsilon_l}{P_l}\hspace{-.05cm}<\hspace{-.05cm}
                \sum_{i\in\mathcal{I}_l} |h^{\mathsf{e}}_{(i, m)}|^2
            \Big)\hspace{-.1cm} ~d\bm{p}_m
        \Big]
    \bigg],
    \label{eq:SCP_layer_4}
\end{align}
where $\epsilon_l=\min_{i\in\{0,...,I_l-1\}}\SNR{s}_{i}$ for brevity, $\lambda_l$ denotes the density of Eves in layer~$l$, and $\bm{p}_m$ denotes the location of Eve~$m$.
Equation~\eqref{eq:SCP_layer_1} follows from the independence of the HPPP, using the property
$\mathcal{P}(\max_{m\in\mathcal{M}_l} X_m < C)=\prod_{m\in\mathcal{M}_l} \mathcal{P}(X_m<C)$ for independent random variables.
Equation~\eqref{eq:SCP_layer_2} is obtained by applying the probability generating functional of the HPPP~\cite{ElSawy17-SurvTut-HPPP}, given by $\mathbb{E}_{\mathcal{M}_l}[\prod_{m\in\mathcal{M}_l}f(e_m)]=\exp \big(-\lambda_l\int_{\mathbb{R}^2}1-f(e_m) \bm{d}e_m\big)$.
Equation~\eqref{eq:SCP_layer_4} is derived by assuming uniform distances from all relays in a layer to an Eve, simplifying the distance term to $d^{\mathsf{e}}_{(i,m)} = d^{\mathsf{e}}_m$.

The distribution of the summation of channel power gains $\sum_{i\in\mathcal{I}_l} |h^{\mathsf{e}}_{(i, m)}|^2$ presents an analytical challenge, as it involves the sum of non-central chi-squared random variables with different scaling factors.
Since such a distribution does not admit a closed-form CDF, we approximate it using a gamma distribution~\cite{casella2024statistical}.
The parameters of the gamma distribution are determined through moment matching, which provides a tractable characterization while preserving the key statistical properties of the original distribution.
Let $K_i$ denote the Rician $K$-factor of hop~$i$.
Assuming unit mean channel power, $|h^{\mathsf{e}}_{(i, m)}|^2$ follows a non-central chi-squared distribution with non-centrality parameter $2K_i$ and a scaling factor $1/(2+2K_i)$.
We define $Y_l=\sum_{i\in\mathcal{I}_l} |h^{\mathsf{e}}_{(i, m)}|^2$, whose mean and variance are given by $\mathbb{E}[Y_l]=I_l$ and $\mathrm{Var} (Y_l)=\sum_{i\in\mathcal{I}_l} \frac{2K_i+1}{(K_i+1)^2}$,
where $I_l$ denotes the number of hops in layer~$l$.
The shape parameter $m_l$ and scale parameter $\theta_l$ of the equivalent gamma distribution are defined as
\begin{align}
    m_l=\frac{(\mathbb{E}[Y_l])^2}{\mathrm{Var}(Y_l)}
    ,~
    \theta_l=\frac{\mathrm{Var}(Y_l)}{\mathbb{E}[Y_l]}
\end{align}
Accordingly, by substituting the gamma CDF into~\eqref{eq:SCP_layer_4}, $\mathcal{P}_l$ can be approximated as
\begin{align}
\hspace{-.2cm}\mathcal{P}_l &\approx
    \hspace{-0.05cm}\mathbb{E}_{|h^\mathsf{s}_i|}
    \bigg[
        \hspace{-.05cm}\exp\hspace{-.05cm}\Big(
            \hspace{-.05cm}-\hspace{-.05cm}\lambda_l\hspace{-.05cm}
            \underbrace{
                \int_{\mathbb{R}^2}\hspace{-.1cm}
                1-
                \frac{\gamma\big(m_l, \frac{n_0 \epsilon_l}{P_l \theta_l} (d^{\mathsf{e}}_m)^{\alpha_l} \big)}{\Gamma(m_l)}
                ~d\bm{p}_m
            }_{J_1}
        \Big)
    \bigg]
    \\[-0.3cm]
    &=\mathbb{E}_{|h^\mathsf{s}_i|}\hspace{-0.05cm}
    \bigg[
        \hspace{-.05cm}\exp\hspace{-0.05cm}\Big(\hspace{-0.05cm}-\hspace{-0.05cm}\lambda_l C \epsilon_l^{-\frac{2}{\alpha_l}}\Big)
    \bigg],
    \label{eq:SCP_layer_5}
\end{align}
where $C=\pi \frac{\Gamma\big(m_l+\frac{2}{\alpha_l}\big)}{\Gamma(m_l)} \big(\frac{P_l \theta_l}{n_0}\big)^{\frac{2}{\alpha_l}}$.
$\Gamma(\cdot)$ and $\gamma(\cdot, \cdot)$ denote the Euler gamma function and the lower incomplete gamma function, respectively.
We then simplify $J_1$ using the lemma:

\begin{lemma}
    (Spatial Integral) The following equality holds for the integration over the two-dimensional Euclidean plane:
    \begin{equation}
        \int_{\mathbb{R}^2}\hspace{-.1cm}
                1-
                \frac{\gamma\big(m_l, k (d^{\mathsf{e}}_m)^{\alpha_l} \big)}{\Gamma(m_l)}
                ~d\bm{p}_m
        =\pi\frac{\Gamma\big(m_l+\frac{2}{\alpha_l}\big)}{\Gamma(m_l)} k^{-\frac{2}{\alpha_l}}.
        \nonumber
    \end{equation}
    \label{lemma1}
\end{lemma}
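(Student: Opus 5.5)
The plan is to reduce the planar integral to a one-dimensional radial integral and then evaluate it as a gamma-function moment. Write the integrand as the regularized upper incomplete gamma function,
\begin{equation}
1-\frac{\gamma(m_l,x)}{\Gamma(m_l)}=\frac{\Gamma(m_l,x)}{\Gamma(m_l)}=\mathcal{P}(G>x), \nonumber
\end{equation}
where $G\sim\mathrm{Gamma}(m_l,1)$. This is the tail probability of a unit-scale gamma variable, so it is nonnegative, bounded by one, and decays exponentially in $x$. Since the integrand depends on $\bm{p}_m$ only through $r=d^{\mathsf{e}}_m=\|\bm{p}_m\|$, we pass to polar coordinates centred at the relay location. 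This gives
\begin{equation}
J=2\pi\int_0^\infty r\,\mathcal{P}\big(G>k r^{\alpha_l}\big)\,dr. \nonumber
\end{equation}

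Next we substitute $u=k r^{\alpha_l}$, so that $r=(u/k)^{1/\alpha_l}$ and $r\,dr=\frac{1}{\alpha_l}k^{-2/\alpha_l}u^{2/\alpha_l-1}\,du$. This pulls out the factor $k^{-2/\alpha_l}$ and leaves
\begin{equation}
J=\frac{2\pi}{\alpha_l}k^{-\frac{2}{\alpha_l}}\int_0^\infty u^{\frac{2}{\alpha_l}-1}\,\mathcal{P}(G>u)\,du. \nonumber
\end{equation}

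The remaining integral is evaluated with the tail formula for moments: for a nonnegative random variable $G$ and $s>0$,
\begin{equation}
\mathbb{E}[G^s]=s\int_0^\infty u^{s-1}\mathcal{P}(G>u)\,du. \nonumber
\end{equation}
This follows from writing $G^s=\int_0^G s u^{s-1}\,du$ and exchanging the order of integration via Tonelli, which is justified because the integrand is nonnegative. Taking $s=2/\alpha_l$ gives
\begin{equation}
J=\pi k^{-\frac{2}{\alpha_l}}\,\mathbb{E}\big[G^{2/\alpha_l}\big]. \nonumber
\end{equation}
The standard gamma moment formula,
\begin{equation}
\mathbb{E}[G^s]=\frac{\Gamma(m_l+s)}{\Gamma(m_l)}, \nonumber
\end{equation}
then yields exactly the claimed right-hand side.

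An equivalent route avoids probability language: write $\Gamma(m_l,u)=\int_u^\infty t^{m_l-1}e^{-t}\,dt$, swap the order of integration in $\int_0^\infty u^{s-1}\Gamma(m_l,u)\,du$, and integrate by parts.

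There is no real obstacle here. The only points needing care are the standing assumptions $k>0$, $m_l>0$ and $\alpha_l>0$. These make the substitution monotone and ensure finiteness: the integral converges at $0$ because $s>0$, and at $\infty$ because the gamma tail decays exponentially. We also note that the rotational symmetry used for the polar-coordinate step relies on the earlier simplification $d^{\mathsf{e}}_{(i,m)}=d^{\mathsf{e}}_m$ made before \eqref{eq:SCP_layer_4}.
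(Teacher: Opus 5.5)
Your proof is correct, but after the common polar-coordinate step it follows a different route from the paper. The paper stays purely analytic. It substitutes $s=r^{\alpha_l}$ and rewrites the upper incomplete gamma function via the representation $\Gamma(m_l,ks)=(ks)^{m_l}e^{-ks}\int_0^\infty e^{-ksu}(1+u)^{m_l-1}\,du$. It then exchanges the $s$- and $u$-integrals, evaluates the inner $s$-integral as $\Gamma(m_l+\frac{2}{\alpha_l})\,(k(1+u))^{-m_l-\frac{2}{\alpha_l}}$, and finishes with $\int_0^\infty(1+u)^{-1-\frac{2}{\alpha_l}}\,du=\frac{\alpha_l}{2}$. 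You instead read the integrand as the tail $\mathcal{P}(G>u)$ of a unit-scale gamma variable and apply the layer-cake identity $\mathbb{E}[G^s]=s\int_0^\infty u^{s-1}\mathcal{P}(G>u)\,du$ with $s=\frac{2}{\alpha_l}$. The only gamma-specific input is then the fractional moment $\Gamma(m_l+s)/\Gamma(m_l)$. Your argument is shorter and makes the Tonelli justification explicit; the paper swaps integrals without comment, which is equally legitimate there by nonnegativity. It also exposes the general structure: $\int_{\mathbb{R}^2}\mathcal{P}\big(Y>k(d^{\mathsf{e}}_m)^{\alpha_l}\big)\,d\bm{p}_m=\pi k^{-2/\alpha_l}\,\mathbb{E}[Y^{2/\alpha_l}]$ for any nonnegative $Y$. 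Applied to $J_1$, this shows that the gamma moment-matching step affects the constant $C$ only through the approximation $\mathbb{E}[Y_l^{2/\alpha_l}]\approx\theta_l^{2/\alpha_l}\Gamma(m_l+\frac{2}{\alpha_l})/\Gamma(m_l)$. The paper's computation, being tied to the incomplete gamma function, does not make this visible. The paper's version, in turn, is self-contained calculus that needs no probabilistic interpretation of the integrand.
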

\vspace{-0.6cm}
\begin{proof}
    See Appendix~\ref{sec:Proof of SCP for Rician fading}.
\end{proof}

The expectation in \eqref{eq:SCP_layer_5} can be further expressed by applying integration by parts as
\begin{align}
    &
    \mathbb{E}_{|h^\mathsf{s}_i|}\hspace{-0.05cm}
    \Big[
        \exp\hspace{-0.05cm}\big(\hspace{-0.1cm}-\hspace{-0.1cm}\lambda_l C \epsilon_l^{-\frac{2}{\alpha_l}}\big)
    \Big]\hspace{-0.1cm}
    =\hspace{-0.1cm}\int_0^{\infty}\hspace{-0.25cm}
        \exp\hspace{-0.05cm}\Big(\hspace{-0.1cm}-\hspace{-0.1cm}\lambda_l C x^{-\frac{2}{\alpha_l}}\Big)
        f_X(x)\hspace{-0.05cm}~dx
    \nonumber\\
    &=1-\frac{2 \lambda_l C}{\alpha_l}
        \underbrace{\int_0^{\infty}\hspace{-0.2cm}
            x^{-\frac{2}{\alpha_l}-1}
            \exp\hspace{-0.05cm}\big(\hspace{-0.1cm}-\hspace{-0.1cm}\lambda_l C x^{-\frac{2}{\alpha_l}}\big)
            F_X(x)\hspace{-0.05cm}~dx
        }_{J_2},
    \label{eq:SCP_layer_6}
\end{align}
where $X=\min_{i\in\mathcal{I}_l}\SNR{s}_{i}$; $f_X(x)$ and $F_X(x)$ denote the PDF and CDF of $X$, respectively.
The CDF of $X$, the minimum of non-central chi-squared random variables, is given by
\begin{align}
    \hspace{-0.4cm}F_X(x)\hspace{-0.05cm}
    &=\hspace{-0.05cm}1\hspace{-0.05cm}-\hspace{-0.05cm}\mathcal{P}\big(\min_{i\in\mathcal{I}_l}\SNR{s}_{i}\hspace{-0.05cm}>\hspace{-0.05cm}x\big)\hspace{-0.05cm}
    =\hspace{-0.05cm}1\hspace{-0.05cm}-\hspace{-0.05cm}\prod_{i\in\mathcal{I}_l}\mathcal{P}\big(\SNR{s}_i\hspace{-0.05cm}>\hspace{-0.05cm}x\big)
    \\
    &=\hspace{-0.05cm}1\hspace{-0.05cm}-\hspace{-0.05cm}\prod_{i\in\mathcal{I}_l}
    Q_1\Big(\hspace{-0.05cm}
        (2K_i)^{\frac{1}{2}},
        \Big[2(K_i+1)\frac{n_0(d_i^{\mathsf{s}})^{\alpha_l}}{P_l}x\Big]^{\frac{1}{2}}
    \Big),\!\!
\end{align}
where $Q_1(\cdot,\cdot)$ denotes the first-order Marcum Q-function.

Since the integral in \eqref{eq:SCP_layer_6} is analytically intractable, we approximate the product of multiple Marcum Q-functions by a single equivalent Marcum Q-function:
$
    \prod_{i\in\mathcal{I}_l}Q_1(a_i,b_i\sqrt{x})
    \approx
    Q_1(\hat{a}_l,\hat{b}_l\sqrt{x}).
$
We first define $\hat{b}_l$ as $(\sum_{i\in\mathcal{I}_l}b_i^2)^{\frac{1}{2}}$, leveraging the identity $Q_1(0,b)=\exp(-b^2/2)$ \cite[Eq.~(4.45)]{simon2004digital}.
Although this identity strictly holds only when the first parameter is zero, we extend this definition to the general case of nonzero $a_i$'s.
This is justified by \cite{bocus2013approx} that the two parameters of the first-order Marcum Q-function can be treated independently with high accuracy.
Subsequently, $\hat{a}_l$ is determined via curve fitting to minimize the deviation from the exact CDF.
This approximation is empirically verified in Section~\ref{sec:marcum_validation}.
Consequently, the CDF of $X$ is approximated as
\begin{align}
    F_X(x) \approx 1-Q_1\Big(\hat{a}_l, \Big[\sum_{i\in\mathcal{I}_l}2(K_i+1)\frac{n_0(d^{\mathsf{s}}_i)^{\alpha_l}}{P_l}x\Big]^{\frac{1}{2}}\Big).
\end{align}

With this approximate CDF, the term $J_2$ in \eqref{eq:SCP_layer_6} can be evaluated by invoking the following lemma.
\begin{lemma}
    (Approximation for Marcum Q-function Integral)
    The following approximation holds for large values of $|\hat{a}_l|$:
    \begin{align}
        \hspace{-0.2cm}\int_0^{\infty}
        x^{-\frac{2}{\alpha_l}-1}
        \exp\big(\hspace{-.1cm}-\hspace{-.05cm}\lambda_l& C x^{-\frac{2}{\alpha_l}}\big)
        \big[
        1-Q_1\big(\hat{a}_l, \hat{b}_l\sqrt{x}\big)
        \big]
        ~dx
        \nonumber
        \\[-.2cm]
        &~\approx\frac{\alpha_l}{2\lambda_l C}
        \bigg[1-\exp
                \Big(\hspace{-.1cm}-\hspace{-.05cm}\lambda_l C
                    \big(\frac{\hat{b}_l^2}{\hat{a}_l^2}\big)^{\frac{2}{\alpha_l}}
                \Big)
        \bigg].
        \nonumber
    \end{align}
    \label{lemma2}
\end{lemma}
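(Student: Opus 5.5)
The approach is to replace $1-Q_1(\hat{a}_l,\hat{b}_l\sqrt{x})$, the CDF of $X$, by a step function located at the point where its argument crosses the first parameter, and then integrate exactly.

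The justification comes from the large-$|\hat{a}_l|$ behaviour of the first-order Marcum Q-function. For large $a$, $Q_1(a,b)$ transitions sharply from $1$ to $0$ near $b=a$. One way to see this is the Gaussian approximation $Q_1(a,b)\approx Q(b-a)$ for large $a$ (with $Q(\cdot)$ the Gaussian tail), or equivalently the observation that a non-central chi-squared variable with large non-centrality concentrates around its mean. Hence
\begin{equation}
1-Q_1\big(\hat{a}_l,\hat{b}_l\sqrt{x}\big)\approx \mathbb{1}\big\{\hat{b}_l\sqrt{x}>\hat{a}_l\big\}=\mathbb{1}\big\{x>x_0\big\},\qquad x_0=\hat{a}_l^2/\hat{b}_l^2 .
\nonumber
\end{equation}
The transition width in $x$ scales like $\hat{a}_l/\hat{b}_l^2$. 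This is small relative to $x_0$, since their ratio is of order $1/\hat{a}_l$, so the step approximation becomes accurate as $|\hat{a}_l|$ grows.

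Substituting the step function leaves
\begin{equation}
\int_{x_0}^{\infty}x^{-\frac{2}{\alpha_l}-1}\exp\big(-\lambda_l C x^{-\frac{2}{\alpha_l}}\big)\,dx .
\nonumber
\end{equation}
Change variables to $u=x^{-2/\alpha_l}$, so that $du=-\frac{2}{\alpha_l}x^{-\frac{2}{\alpha_l}-1}dx$ and the limits become $u\in(0,\,x_0^{-2/\alpha_l})$. The integral then reduces to
\begin{equation}
\frac{\alpha_l}{2}\int_0^{x_0^{-2/\alpha_l}}e^{-\lambda_l C u}\,du=\frac{\alpha_l}{2\lambda_l C}\Big[1-\exp\big(-\lambda_l C\,x_0^{-2/\alpha_l}\big)\Big].
\nonumber
\end{equation}
With $x_0^{-2/\alpha_l}=(\hat{b}_l^2/\hat{a}_l^2)^{2/\alpha_l}$, this is exactly the claimed right-hand side.

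The main obstacle is making the step-function replacement quantitatively credible rather than purely heuristic. The plan for that step is to bound the error as
\begin{equation}
\int g(x)\,\big|1-Q_1-\mathbb{1}\{x>x_0\}\big|\,dx ,\qquad g(x)=x^{-\frac{2}{\alpha_l}-1}\exp\big(-\lambda_l C x^{-\frac{2}{\alpha_l}}\big).
\nonumber
\end{equation}
The weight $g$ is bounded and smooth near $x_0>0$. The difference term is non-negligible only in a window of width $O(\hat{a}_l/\hat{b}_l^2)$ around $x_0$, with Gaussian-type tails there. The error is therefore of order $g(x_0)\,\hat{a}_l/\hat{b}_l^2$. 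Relative to the main term it vanishes as $|\hat{a}_l|\to\infty$, because to first order the symmetric Gaussian transition cancels against the step. I would state this as an asymptotic approximation rather than a strict equality, consistent with the lemma's wording "for large values of $|\hat{a}_l|$".
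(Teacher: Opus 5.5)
Your proof is correct. It reaches the paper's formula by a genuinely different route, although the underlying approximation turns out to be the same.

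The paper never localizes the CDF. It expands $\exp(-\lambda_l C x^{-2/\alpha_l})$ and $1-Q_1$ (as a Poisson mixture of lower incomplete gamma functions) into series, integrates term by term into Gamma functions, and resums into Kummer's function to obtain
\[
J_2=\frac{\alpha_l}{2\lambda_l C}\Big[1-\sum_{n\ge 0}\frac{1}{n!}\Big(-\lambda_l C\big(\tfrac{\hat b_l^2}{2}\big)^{2/\alpha_l}\Big)^n\,\Gamma\big(1-\tfrac{2n}{\alpha_l}\big)\,M\big(\tfrac{2n}{\alpha_l},1,-\tfrac{\hat a_l^2}{2}\big)\Big].
\]
Only then does it apply the large-argument asymptotic $\Gamma(1-s)M(s,1,-z)\approx z^{-s}$.

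For $s<1$, the quantity $(\hat b_l^2/2)^{s}\,\Gamma(1-s)\,M(s,1,-\hat a_l^2/2)$ is exactly $\mathbb{E}[Y^{-s}]$, where $Y$ has CDF $1-Q_1(\hat a_l,\hat b_l\sqrt{x})$. So the paper's step is $\mathbb{E}[Y^{-s}]\approx x_0^{-s}$ with $x_0=\hat a_l^2/\hat b_l^2$. That is your point-mass (step-CDF) approximation, imposed moment by moment.

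\textbf{What each route buys.} The paper's route isolates the approximation in one asymptotic expansion, whose next term (a factor $1+2s^2/\hat a_l^2$) would give an explicit correction. The cost is rigor; the whole manipulation is only formal, because:
\begin{itemize}
\item the term-wise integrals diverge at $x=0$ once $\frac{2}{\alpha_l}(n+1)\ge k+m+1$, since negative moments of $Y$ of order $\ge 1$ do not exist;
\item $\Gamma(1-2n/\alpha_l)$ has poles whenever $2n/\alpha_l$ is a positive integer;
\item the asymptotic of $M$ is not uniform in $n$.
\end{itemize}
Your route avoids all of this. You approximate a bounded function against an integrable weight and finish with an elementary substitution. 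It also makes two things visible. First, the lemma treats the bottleneck SNR as deterministic at $x_0$, which explains why accuracy improves as the $K$-factors, and hence $\hat a_l$, grow. Second, $1-\frac{2\lambda_l C}{\alpha_l}J_2\approx\exp(-\lambda_l C x_0^{-2/\alpha_l})$ can be read off directly, without the integration by parts in \eqref{eq:SCP_layer_6}.

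\textbf{A correction to your error discussion.} The cancellation argument is unnecessary, and it does not apply to the absolute-value bound you wrote. Set $t=\lambda_l C x_0^{-2/\alpha_l}$. Then your estimate satisfies $g(x_0)|\hat a_l|/\hat b_l^2=g(x_0)x_0/|\hat a_l|=te^{-t}/(\lambda_l C|\hat a_l|)$. Its ratio to the main term is $\frac{2}{\alpha_l|\hat a_l|}\cdot\frac{t}{e^{t}-1}\le\frac{2}{\alpha_l|\hat a_l|}$. So the crude bound already gives a relative error that vanishes as $|\hat a_l|$ grows. The signed cancellation only sharpens this to $O(\hat a_l^{-2})$, consistent with the paper's next asymptotic term.

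To justify that the difference matters only in a window around $x_0$, add a Chernoff-type bound on the lower tail of $Y$. This is needed because $g$ can peak far to the left of $x_0$, where a tail estimate "near $x_0$" says nothing.
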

\vspace{-0.6cm}
\begin{proof}
    See Appendix~\ref{sec:Proof of SCP for Rician fading}.
\end{proof}

By substituting the result of Lemma~\ref{lemma2} into \eqref{eq:SCP_layer_6}, the closed-form of $\mathcal{P}_l$ is derived as
\begin{align}
    \mathcal{P}_l=\exp\bigg[
        -\kappa_l\Big(
            \sum_{i\in\mathcal{I}_l}
            (K_i+1)(d_i^{\mathsf{s}})^{\alpha_l}
        \Big)^{\frac{2}{\alpha_l}}
    \bigg],
    \label{eq:SCP_layer_final}
\end{align}
where $\kappa_l=
\pi\lambda_l
\frac{\Gamma(m_l+\frac{2}{\alpha_l})}{\Gamma(m_l)}
\big(\frac{2\theta_l}{\hat{a}_l^2}\big)^{\frac{2}{\alpha_l}}$.
Substituting \eqref{eq:SCP_layer_final} into $\mathcal{P}=\prod_{l\in\mathcal{L}}\mathcal{P}_l$ yields the end-to-end SCP in multi-layer network with heterogeneous Rician channels:
\begin{align}
    \mathcal{P} = \exp\Big[
        -\sum_{l\in\mathcal{L}}\kappa_l\Big(
            \sum_{i\in\mathcal{I}_l} (K_i+1)(d^\mathsf{s}_{i})^{\alpha_l}
        \Big)^{\frac{2}{\alpha_l}}
    \Big].
    \label{eq:scp_rician}
\end{align}

\section{Numerical Experiments}
\label{sec:Numerical_Experiments}
\subsection{Simulation Settings}

This section aims to validate the proposed scheme by comparing it with the previously introduced baselines in Table~\ref{tab:related_works} within a heterogeneous NTN environment.
We adopt a space, air, ground, and sea integrated network as our NTN testbed, which has been actively discussed recently.
Each network layer contains its unique wireless characteristics, including altitudes, link distances, Rician $K$-factors, and path-loss exponents which are chosen from the literature \cite{Zhang25-OJVT, sun25-Arxiv, Karapantazis05-ST}.
A full list of network parameters is listed in Table~\ref{tab:env_parameters}.

\begin{table}[htb]
    \centering
    \caption{List of Environmental parameters}
    \label{tab:env_parameters}
    {\setlength{\tabcolsep}{2pt}
    \begin{tabular}{cccc}
        \toprule[.75pt]
        Layer & Link distance (km) & Rician $\mathbf{K}_{\text{dB}}$ (mean, var) & PL exponent $\alpha_l$ \\
        \toprule[.35pt]
        LEO & $[200, 550]$ & $(13.5, 1.8)$ & $2.1$ \\
        HAPS & $[20, 380]$ & $(13.5, 1.8)$ & $2.3$ \\
        Ground & $[10, 30]$ & $(7.0, 4.0)$ & $2.9$ \\
        Sea & $[10, 30]$ & $(12.7, 1.2)$ & $2.5$ \\
        \bottomrule[.75pt]
    \end{tabular}
    }
    \vspace{-0.3cm}
\end{table}

\subsection{Validation of the Marcum Q-function Approximation}
\label{sec:marcum_validation}
This subsection validates the accuracy of the proposed approximation, which is given by
\begin{align}
    \hspace{-0.05cm}\prod_{i\in\mathcal{I}_l}
    Q_1\Big(\hspace{-0.05cm}
        (2K_i)^{\frac{1}{2}}&,
        \Big[2(K_i+1)\frac{n_0(d_i^{\mathsf{s}})^{\alpha_l}}{P_l}x\Big]^{\frac{1}{2}}
    \Big)
    \\[-0.2cm]
    &\approx Q_1\Big(\hat{a}_l, \Big[\sum_{i\in\mathcal{I}_l}2(K_i+1)\frac{n_0(d^{\mathsf{s}}_i)^{\alpha_l}}{P_l}x\Big]^{\frac{1}{2}}\Big).
    \nonumber
\end{align}
The fitting of $\hat{a}_l$ is determined by minimizing the squared error between the two functions over a logarithmically spaced grid of $x$ values ranging from $10^{-9}$ to $10^{-2}$.
We conduct 500 Monte-Carlo simulations based on the parameters in Table~\ref{tab:env_parameters}, with the number of hops uniformly selected between 2 and 7.

\begin{figure}[t]
    \centering
    \includegraphics[width=\linewidth]{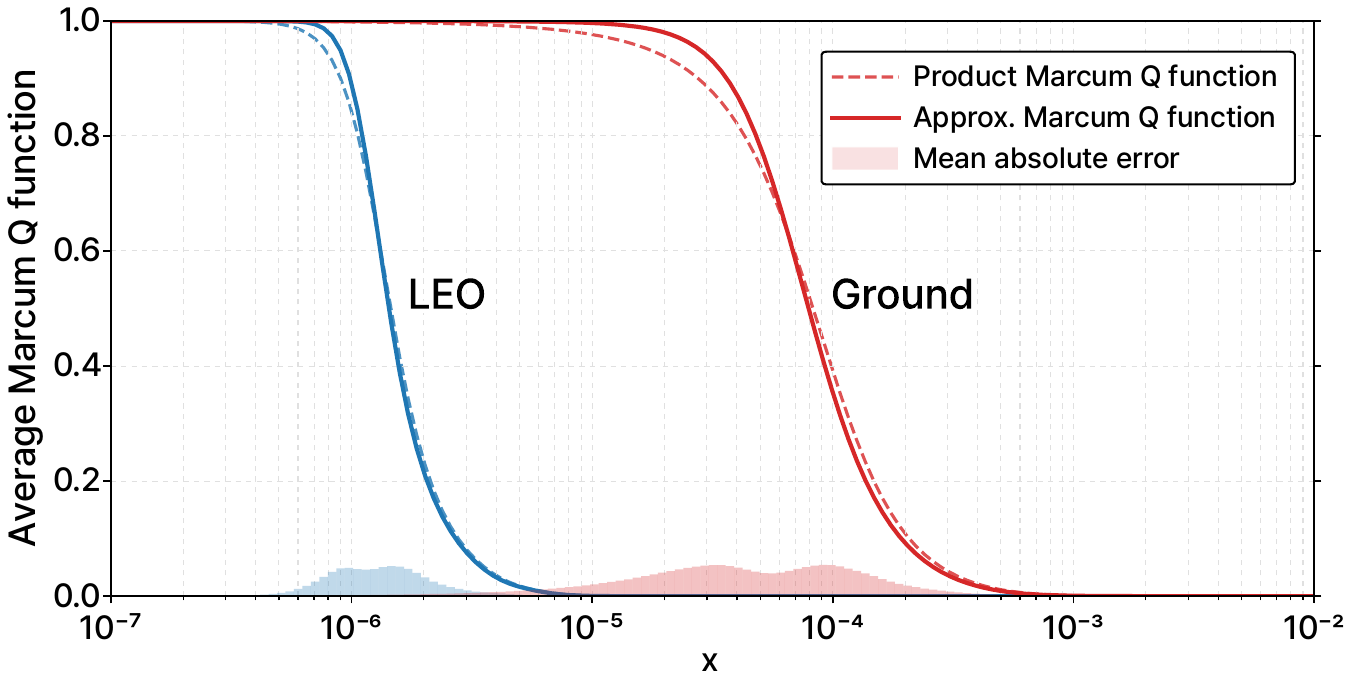}
    \caption{Comparison of the exact product of Marcum Q-functions and the proposed approximation for LEO and ground layers.}
    \vspace{-0.3cm}
    \label{fig:marcum_approx_error}
\end{figure}

Figure~\ref{fig:marcum_approx_error} presents the exact and approximate curves averaged over 500 simulations with randomly selected parameters for the LEO and ground layers, alongside the mean absolute error.
It is observed that the mean error remains below 0.06 for both layers over the entire range of $x$, ensuring negligible impact on the overall performance analysis.
Specifically, the LEO layer, which has minor variations in $K$-factors, achieves consistently higher accuracy compared to the ground layer.
Small fluctuations in $K_i$, which are associated with the first parameter $\hat{a}_l$, cause minimal distortion to the shape of the product, thereby minimizing the approximation error~\cite{bocus2013approx}.

\subsection{Performance Evaluation of the SCP}

\begin{table}[htb]
    \centering
    \vspace{-0.2cm}
    \caption{Closed-form SCP expressions per scheme}
    \label{tab:Closed-form SCP expressions per scheme}
    \vspace{-0.1cm}
{\setlength{\tabcolsep}{3pt}
    \begin{tabular}{ll}
        \toprule[.75pt]
        Method & Closed-form SCP formula\\
        \toprule[.35pt]
        Multi-hop Rayleigh \cite{Yao16-TCOM} &
        $\mathcal{P}\!=\!\exp\!\big[\!
        -\!\!\sum_{l}\textcolor[HTML]{d40b00}{\kappa_l^{\mathsf{Rayleigh}}}
        \big(
            \sum_{i} 
            \hspace{0.1mm}\raisebox{-0.5mm}{\setlength{\unitlength}{1mm}\color[HTML]{d40b00}\dashbox{1}(7.5,2.3){}\hspace{0.5mm}}
            (d^\mathsf{s}_{i})^{\alpha_l}
        \big)^{\frac{2}{\alpha_l}}
        \big]
        $
        \\
        Single-hop Rayleigh \cite{lyu2025-SSIR} &
        $\mathcal{P}\!=\!\exp\!\big[\!
        -\!\!\sum_{l}\textcolor[HTML]{d40b00}{\kappa_l^{\mathsf{Rayleigh}}}
        \big(
            \sum_{i}
            \hspace{0.1mm}\raisebox{-0.5mm}{\setlength{\unitlength}{1mm}\color[HTML]{d40b00}\dashbox{1}(7.5,2.3){}\hspace{0.5mm}}
            (d^\mathsf{s}_{i})^{\textcolor[HTML]{d40b00}{2}}
        \hspace{1.4mm}\big)
        ^{ 
            \hspace{0.1mm}\raisebox{-1mm}{\setlength{\unitlength}{1mm}\color[HTML]{d40b00}\dashbox{0.8}(3.0,2.5){}\hspace{0.3mm}}
        }
        \hspace{0.11mm}\big]
        \!\!
        $
        \\
        Multi-hop Erlang \cite{Lyu25-ICTC} &
        $\mathcal{P}\!=\!\exp\!\big[\!
        -\!\!\sum_{l}\textcolor[HTML]{d40b00}{\kappa_l^{\mathsf{Erlang}}}
        \hspace{1.82mm}
        \big(
            \sum_{i}
            \hspace{0.1mm}\raisebox{-0.5mm}{\setlength{\unitlength}{1mm}\color[HTML]{d40b00}\dashbox{1}(7.5,2.3){}\hspace{0.5mm}}
            (d^\mathsf{s}_{i})^{\alpha_l}
        \big)^{\frac{2}{\alpha_l}}
        \big]
        \!\!
        \vspace{.06cm}
        $
        \\
        \bottomrule[.35pt]
        \addlinespace[.06cm]
        Hetero-Rician (\textbf{Ours}) & 
        $\mathcal{P}\!=\!\exp\!\big[\!
        -\!\!\sum_{l}\kappa_l
        \hspace{7mm}
        \big(
            \sum_{i}
            \hspace{-0.3mm}(K_i\hspace{-1.0mm}+\hspace{-1.0mm}1)
            (d^\mathsf{s}_{i})^{\alpha_l}
        \big)^{\frac{2}{\alpha_l}}
        \big]
        \!\!
        $
        \\[.06cm]
        \bottomrule[.75pt]
    \end{tabular}
    \par\nointerlineskip\vskip 1mm
    {
    \raggedright
        * \textcolor[HTML]{d40b00}{Red highlights} indicate deviation from the proposed formula. \\
    * $\kappa_l^{\mathsf{Rayleigh}} = \pi\lambda_l\Gamma(1+\frac{2}{\alpha_l})\Gamma(1-\frac{2}{\alpha_l})$,~
    $\kappa_l^{\mathsf{Erlang}} = \pi \lambda_l \frac{\Gamma(1-\frac{2}{\alpha_l}) \Gamma(I_l+\frac{2}{\alpha_l})}{\Gamma(I_l)}$
    }
} 
\vspace{-0.3cm}
\end{table}

The proposed scheme \eqref{eq:scp_rician} is evaluated against three baseline schemes, whose closed-form expressions are provided in Table~\ref{tab:Closed-form SCP expressions per scheme}.\footnote{We include the derivation of Erlang distribution in Appendix~\ref{sec:Proof of SCP for Rayleigh fading} as no direct closed-form formula is provided in \cite{Lyu25-ICTC}.}
The comparison schemes have a mathematical form highly similar to our scheme, except for the coefficient terms. 
The following experiments (Figs.\ref{fig:Kfactor_vs_scp}-\ref{fig:density_vs_scp}) show that the proposed scheme provides more accurate results than the baseline schemes.
This explains why the heuristic calibration adopted in the baselines to adjust the coefficients can achieve high accuracy.

We additionally implement Monte-Carlo simulation, as defined in \eqref{eq:SCP}, that computes SCP by deploying Eves according to an HPPP, calculating the $\xi_l$ for each layer, and computing the empirical probability of the event $\xi_l>0$ for all $l\in\mathcal{L}$ over 100,000 simulations.
We observe how accurately each scheme predicts the SCP for a given environment, using Monte-Carlo simulation as the benchmark.

Figure \ref{fig:Kfactor_vs_scp} shows the SCP across various Rician $K$ factors and Eve densities in the isolated terrestrial and LEO layers.
This setup eliminates inter-layer variability and clearly reveals how the Rician channel characteristics affect the SCP behavior of the proposed scheme.
The link distances are randomly generated based on the ranges in Table~\ref{tab:env_parameters}, while the Eve density and the Rician $K$-factor are fixed for each scenario.
As the Rician $K$-factor increases from 0 to 14~dB, the SCP monotonically increases, indicating that a stronger line-of-sight component benefits the legitimate links more than the eavesdropping links.
The proposed scheme demonstrates a high degree of agreement with the Monte-Carlo simulations across a wide range of parameters, whereas the comparison schemes appear as horizontal lines since they are restricted to Rayleigh fading and cannot capture the effect of varying Rician $K$-factors.
We observe up to 10\%p gap in the low-$K$ regime between the proposed scheme and Monte-Carlo simulations 
as the approximation of Lemma~\ref{lemma2} becomes more accurate as the Rician $K$-factor increases.\footnote{A larger $K_i$ yields a larger effective shift in $\hat{a}_l$ to estimate the product of Marcum $Q$-functions.}
Nevertheless, the proposed analysis remains highly accurate in realistic scenarios as practical Rician $K$-factors typically lie in the range of 10–20~dB~\cite{kim2024cell}.

\begin{figure}[t]
    \centering
    \includegraphics[width=\linewidth]{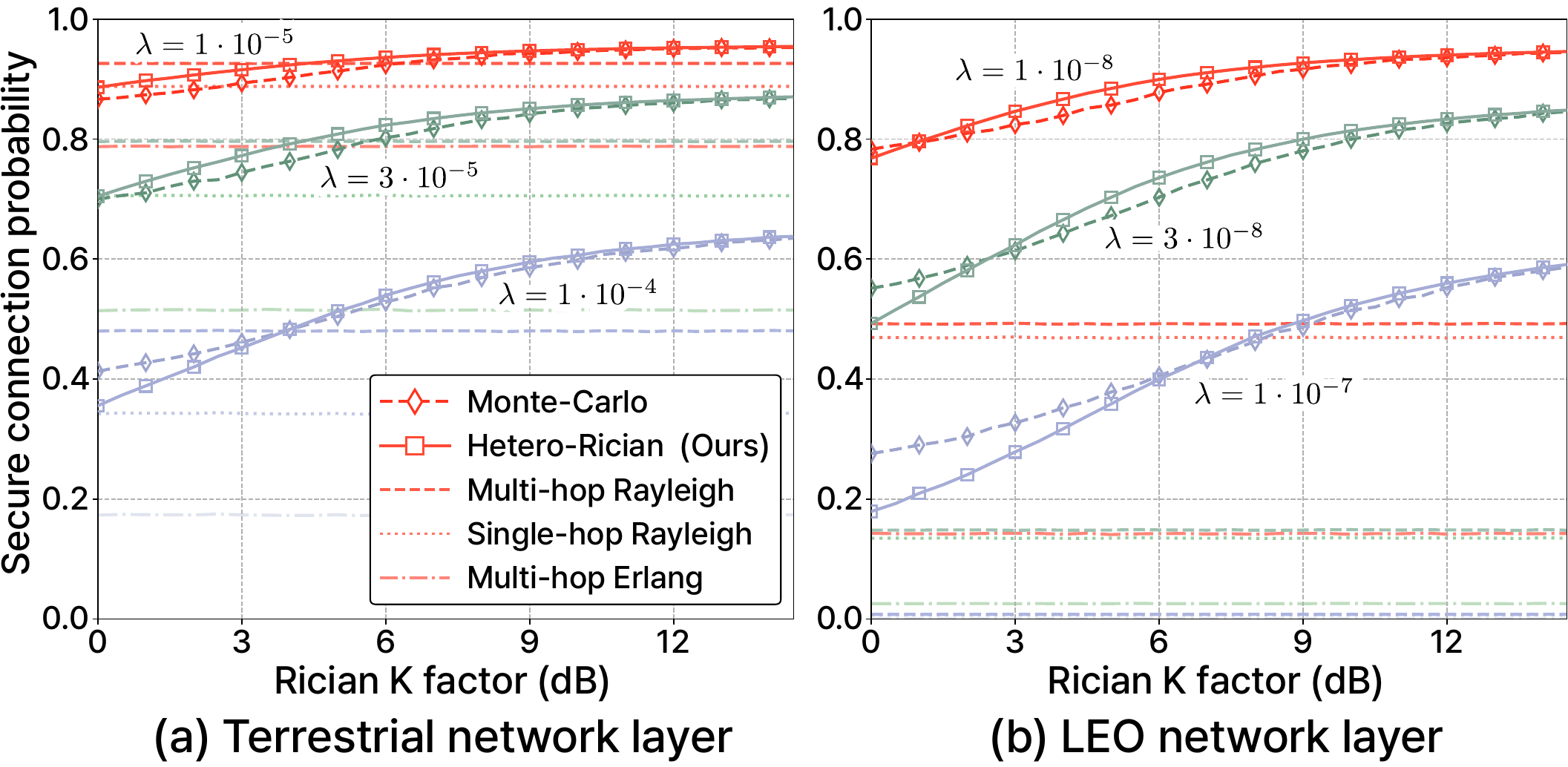}
    \caption{Impact of Rician $K$-factor on SCP in terrestrial and LEO layers. The red, green, and blue curves correspond to Eve densities of $10^{-5}$, $3 \times 10^{-5}$, and $10^{-4}$ (terrestrial), and $10^{-8}$, $3 \times 10^{-8}$, and $10^{-7}$ (LEO).}
    \vspace{-0.3cm}
    \label{fig:Kfactor_vs_scp}
\end{figure}

\begin{figure}[htb]
    \centering    \includegraphics[width=0.75\linewidth]{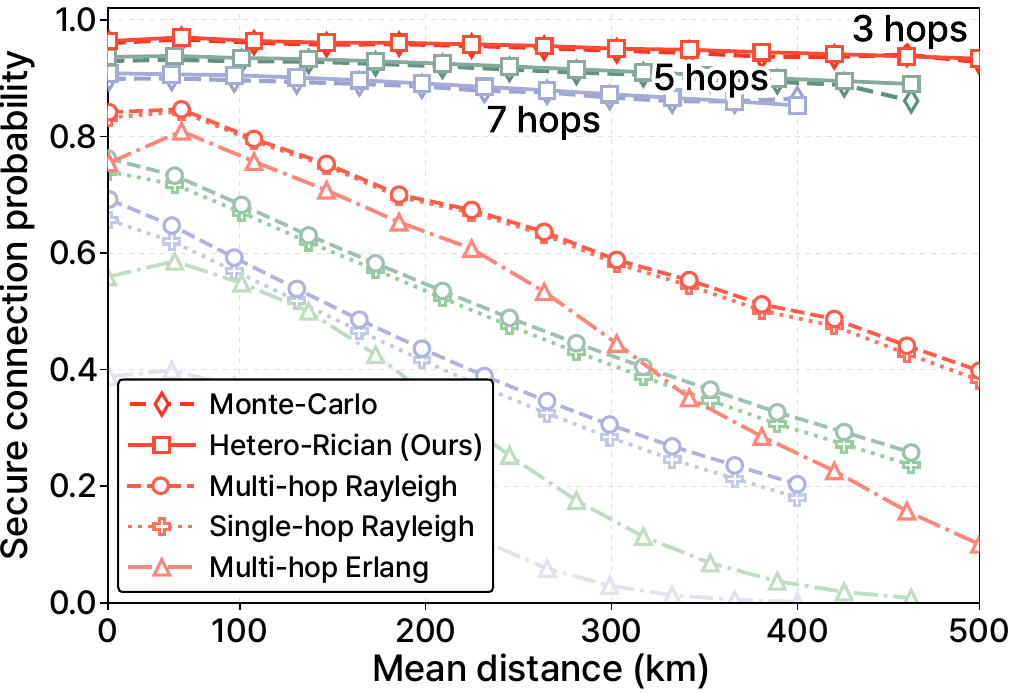}
    \caption{SCP of \{3,5,7\} hops over various average link distances. 
    The red, green, and blue curves correspond to the \{3,5,7\} hop scenarios, respectively.}
    \vspace{-0.3cm}
    \label{fig:distance_vs_scp}
\end{figure}

Figure \ref{fig:distance_vs_scp} shows the SCP across various hops and average link distances in the four-layer NTN.
In our random-topology simulations, the effective distance range in Fig.~\ref{fig:distance_vs_scp} decreases as the number of hops increases.
This is because the number of samples that form paths approaching the maximum feasible connection distance decreases as the hop count increases.
The proposed scheme maintains a high SCP that matches the Monte-Carlo results, whereas the Rayleigh-based schemes experience rapid degradation.
This observation is consistent with Fig.~\ref{fig:visualization}, where the comparison scheme fails to identify the secure connections in most cases, while our scheme successfully captures them.

Figure~\ref{fig:density_vs_scp} examines the SCP behavior versus the Eve densities in a four-layer NTN scenario.
Specifically, the hop count is randomly selected over two to seven, and each hop is randomly assigned to one of the four layers, thereby constructing a more comprehensive experimental setup than those in Figs.~\ref{fig:Kfactor_vs_scp} and~\ref{fig:distance_vs_scp}.
The SCP expression shows nonlinear behavior as the Eve density varies geometrically from $10^{-9}$ to $10^{-4}$.
The conventional schemes assuming Rayleigh fading fail to capture the heterogeneity in Rician fading.
Thus, the modeling errors accumulate as the Eve density increases, leading to a significantly steeper SCP decrease.
In contrast, the proposed hetero-Rician scheme acts as a tight upper bound that remains very close to the Monte-Carlo results over the entire range.
Numerically, the proposed scheme achieves a mean absolute error of approximately 0.034, which is about $4.5\times$ smaller than that of the best Rayleigh-based baseline.
In addition, in the high-SCP regime with $\mathcal{P}\geq90\%$, the average gap between the proposed scheme and the Monte-Carlo curve is below 1\%p, confirming that the proposed method provides highly accurate predictions in the primary region of interest for system design~\cite{wang2016physical}.

\begin{figure}[t]
    \centering
    \includegraphics[width=0.75\linewidth]{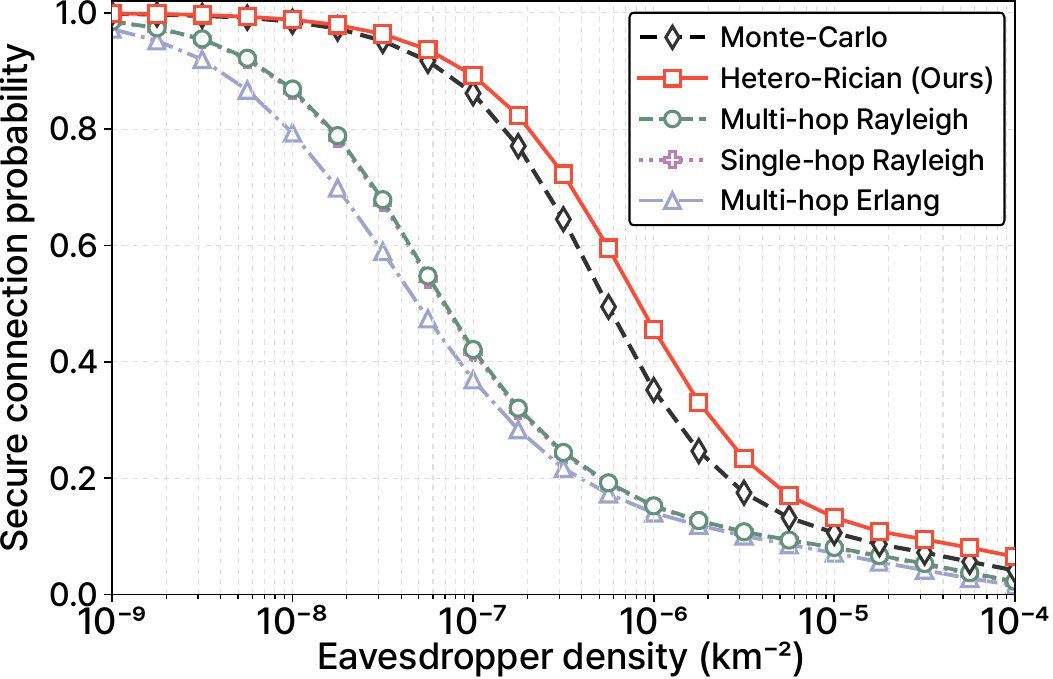}
    \caption{SCP under various Eve densities.}
    \vspace{-0.3cm}
    \label{fig:density_vs_scp}
\end{figure}

\section{Conclusion and Discussion}
In this paper, we analytically characterized the end-to-end SCP of multi-hop routes in heterogeneous multi-layer NTNs under heterogeneous Rician fading.
Unlike prior studies restricted to Rayleigh channels or single-hop scenarios,
this work provides an analytical derivation of the multi-hop SCP under heterogeneous Rician channels, which has remained largely unexplored.
The derived closed-form SCP expression closely matches Monte-Carlo simulations across diverse network conditions.
In particular, the SCP prediction remains robust under diverse Rician $K$-factors, Eve densities, and link distances, even in heterogeneous multi-layer environments.

Despite its generality and accuracy, this study has several limitations that motivate promising avenues for future work.
First, our analysis is restricted to decode-and-forward relaying; extending the framework to amplify-and-forward, compress-and-forward, or hybrid relaying would be a valuable contribution.
Second, this work does not incorporate adversarial or cooperative jamming, which is critical in practical NTN operations \cite{Mu25-IOT}.
Integrating spatially distributed jammers and their power/placement strategies into the SCP analysis remains an important direction.
We hope this study serves as a groundwork for future research on PLS in NTNs.

\bibliographystyle{IEEEtran}
\bibliography{references.bib}

@STRING{IEEE_J_AES        = "{IEEE} Trans. Aerosp. Electron. Syst."}

@STRING{IEEE_J_IOT        = "{IEEE} Internet Things J."}

@STRING{IEEE_J_COML       = "{IEEE} Commun. Lett."}

@STRING{IEEE_J_COM        = "{IEEE} Trans. Commun."}

@STRING{IEEE_S_COM        = "{IEEE} Commun. Surveys Tuts."}

@STRING{IEEE_OJ_VT         = "{IEEE} Open J. Veh. Technol."}

@STRING{IEEE_J_PROC       = "Proc. {IEEE}"}

@ARTICLE{Guo22-SurvTut,
  author={Guo, Hongzhi and Li, Jingyi and Liu, Jiajia and Tian, Na and Kato, Nei},
  journal=IEEE_S_COM, 
  title={A Survey on Space-Air-Ground-Sea Integrated Network Security in 6{G}}, 
  year={2022},
  volume={24},
  number={1},
  pages={53-87},
  doi={10.1109/COMST.2021.3131332}}

@ARTICLE{Jameel19-SurvTut,
  author={Jameel, Furqan and Wyne, Shurjeel and Kaddoum, Georges and Duong, Trung Q.},
  journal=IEEE_S_COM,
  title={A Comprehensive Survey on Cooperative Relaying and Jamming Strategies for Physical Layer Security}, 
  year={2019},
  volume={21},
  number={3},
  pages={2734-2771},
  doi={10.1109/COMST.2018.2865607}}

@ARTICLE{Yao16-TCOM,
  author={Yao, Jianping and Feng, Suili and Zhou, Xiangyun and Liu, Yuan},
  journal=IEEE_J_COM,
  title={Secure Routing in Multihop Wireless Ad-Hoc Networks With Decode-and-Forward Relaying}, 
  year={2016},
  volume={64},
  number={2},
  pages={753-764},
  doi={10.1109/TCOMM.2015.2514094}}

@misc{lyu2025-SSIR,
      title={Secure Multi-Hop Relaying in Large-Scale Space-Air-Ground-Sea Integrated Networks}, 
      author={Hyeonsu Lyu and Hyeonho Noh and Hyun Jong Yang and Kaushik Chowdhury},
      year={2025},
      eprint={2505.00573},
      archivePrefix={arXiv},
      primaryClass={eess.SP},
}

@ARTICLE{bocus2013approx,
  author={Bocus, Mohammud Z. and Dettmann, Carl P. and Coon, Justin P.},
  journal=IEEE_J_COML,
  title={An Approximation of the First Order {M}arcum {Q}-Function with Application to Network Connectivity Analysis}, 
  year={2013},
  volume={17},
  number={3},
  pages={499-502},
  doi={10.1109/LCOMM.2013.011513.122462}}

@INPROCEEDINGS{Lyu25-ICTC,
  author={Lyu, Hyeonsu and Yang, Hyun Jong},
  booktitle={2025 Int. Conf. ICT Converg. (ICTC)}, 
  title={Maneuver by Airstreams for Stratospheric Balloon Base Stations}, 
  year={2025},
  volume={},
  number={},
  pages={},
  doi={}}

@ARTICLE{Zhang25-OJVT,
  author={Zhang, Chao and Li, Qingchao and Xu, Chao and Yang, Lie-Liang and Hanzo, Lajos},
  journal=IEEE_OJ_VT,
  title={Space-Air-Ground Integrated Networks: Their Channel Model and Performance Analysis}, 
  year={2025},
  volume={6},
  number={},
  pages={1501-1523},
  doi={10.1109/OJVT.2025.3575360}}

@misc{sun25-Arxiv,
      title={Modeling and Analysis of Land-to-Ship Maritime Wireless Channels at 5.8 {GHz}}, 
      author={Shu Sun and Yulu Guo and Meixia Tao and Wei Feng and Jun Chen and Ruifeng Gao and Ye Li and Jue Wang and Theodore S. Rappaport},
      year={2025},
      eprint={2507.15969},
      archivePrefix={arXiv},
      primaryClass={eess.SP},
      url={https://arxiv.org/abs/2507.15969}, 
}

@ARTICLE{Karapantazis05-ST,
  author={Karapantazis, S. and Pavlidou, F.},
  journal=IEEE_S_COM,
  title={Broadband communications via high-altitude platforms: A survey}, 
  year={2005},
  volume={7},
  number={1},
  pages={2-31},
  doi={10.1109/COMST.2005.1423332}}

@book{abramowitz1965handbook,
  title={Handbook of mathematical functions: with formulas, graphs, and mathematical tables},
  author={Abramowitz, Milton and Stegun, Irene A},
  volume={55},
  year={1965},
  publisher={Courier Corporation}
}

@article{kim2024cell,
  title={Cell-free massive non-terrestrial networks},
  author={Kim, Seungnyun and Wu, Jiao and Shim, Byonghyo and Win, Moe Z},
  journal={IEEE Journal on Selected Areas in Communications},
  year={2024},
  publisher={IEEE}
}

@ARTICLE{Li25-TAero,
  author={Li, Xin and Li, Yongjun and Song, Xinkang and Li, Jianjia and Zhao, Shanghong},
  journal=IEEE_J_AES,
  title={Physical Layer Security Analysis of Hybrid {FSO/RF} Systems in {SAGIN} With Channel Imperfection and Interference}, 
  year={2025},
  volume={61},
  number={5},
  pages={13156-13171},
  doi={10.1109/TAES.2025.3576315}}

@ARTICLE{Brennan03-MRC,
  author={Brennan, D.G.},
  journal=IEEE_J_PROC,
  title={Linear diversity combining techniques}, 
  year={2003},
  volume={91},
  number={2},
  pages={331-356},
  doi={10.1109/JPROC.2002.808163}}

@book{casella2024statistical,
  title={Statistical inference},
  author={Casella, George and Berger, Roger},
  year={2024},
  publisher={Chapman and Hall/CRC}
}

@ARTICLE{ElSawy17-SurvTut-HPPP,
  author={ElSawy, Hesham and Sultan-Salem, Ahmed and Alouini, Mohamed-Slim and Win, Moe Z.},
  journal=IEEE_S_COM,
  title={Modeling and Analysis of Cellular Networks Using Stochastic Geometry: A Tutorial}, 
  year={2017},
  volume={19},
  number={1},
  pages={167-203},
  doi={10.1109/COMST.2016.2624939}}

@ARTICLE{Bhatnagar13-CL,
  author={Bhatnagar, Manav R.},
  journal=IEEE_J_COML, 
  title={On the Capacity of Decode-and-Forward Relaying over {Rician} Fading Channels}, 
  year={2013},
  volume={17},
  number={6},
  pages={1100-1103},
  doi={10.1109/LCOMM.2013.050313.122813}}

@ARTICLE{Mu25-IOT,
  author={Mu, Shunkai and Lei, Hongjiang and Park, Ki-Hong and Pan, Gaofeng},
  journal=IEEE_J_IOT,
  title={Finite Block-Length Covert Communication in Space-Air-Ground Integrated Networks}, 
  year={2025},
  volume={},
  number={},
  pages={1-1},
  doi={10.1109/JIOT.2025.3531881}}

@book{simon2004digital,
  title={Digital communication over fading channels},
  author={Simon, Marvin K and Alouini, Mohamed-Slim},
  year={2004},
  publisher={John Wiley \& Sons}
}

@article{li2019physical,
  title={Physical-layer security in space information networks: A survey},
  author={Li, Bin and Fei, Zesong and Zhou, Caiqiu and Zhang, Yan},
  journal={IEEE Internet of things journal},
  volume={7},
  number={1},
  pages={33--52},
  year={2019},
  publisher={IEEE}
}

@ARTICLE{Salim25-SurvTut,
  author={Salim, Sara and Moustafa, Nour and Reisslein, Martin},
  journal=IEEE_S_COM, 
  title={Cybersecurity of Satellite Communications Systems: A Comprehensive Survey of the Space, Ground, and Links Segments}, 
  year={2025},
  volume={27},
  number={1},
  pages={372-425},
  doi={10.1109/COMST.2024.3408277}}

@article{wang2016physical,
  title={Physical layer security in heterogeneous cellular networks},
  author={Wang, Hui-Ming and Zheng, Tong-Xing and Yuan, Jinhong and Towsley, Don and Lee, Moon Ho},
  journal=IEEE_J_COM,
  volume={64},
  number={3},
  pages={1204--1219},
  year={2016},
  publisher={IEEE}
}

@ARTICLE{MA22-TCOM,
  author={Ma, Yuanyuan and Lv, Tiejun and Pan, Gaofeng and Chen, Yunfei and Alouini, Mohamed-Slim},
  journal=IEEE_J_COM,
  title={On Secure Uplink Transmission in Hybrid {RF}-{FSO} Cooperative Satellite-Aerial-Terrestrial Networks}, 
  year={2022},
  volume={70},
  number={12},
  pages={8244-8257},
  doi={10.1109/TCOMM.2022.3214891}}

\newpage

\begin{appendices}

\section{Visualization results with a 99\% SCP threshold}
\label{section:Visualization results with a 99 SCP threshold}
Figure~\ref{fig:visualization_scp99} shows the nodes that have relay paths whose SCP is at least 99\% from the source.
We use the real-world SAGSIN dataset in \cite{lyu2025-SSIR} to deploy space, HAPS, and ground base stations in southwest America.
we connect adjacent base stations using the environment parameters in \ref{tab:env_parameters}.
Eve densities are set to $10^{-8}$ across all layers.
As shown in Fig.~\ref{fig:visualization_scp99}b, the proposed scheme predicts an SCP of at least 99\% for two satellite base stations and two ground base stations, whereas the Monte Carlo scheme predicts their SCP to be below 99\%.
This is consistent with the results in Figs.~\ref{fig:Kfactor_vs_scp}-\ref{fig:density_vs_scp}, where the proposed scheme behaves like an upper bound on the Monte Carlo scheme.
In contrast, under the Rayleigh fading model, the actual SCP is severely underestimated; consequently, except for a few base stations, it predicts an SCP below 0.99 for almost all base stations, leading to large errors.

\begin{figure}[htb]
    \centering
    \includegraphics[width=1\linewidth]{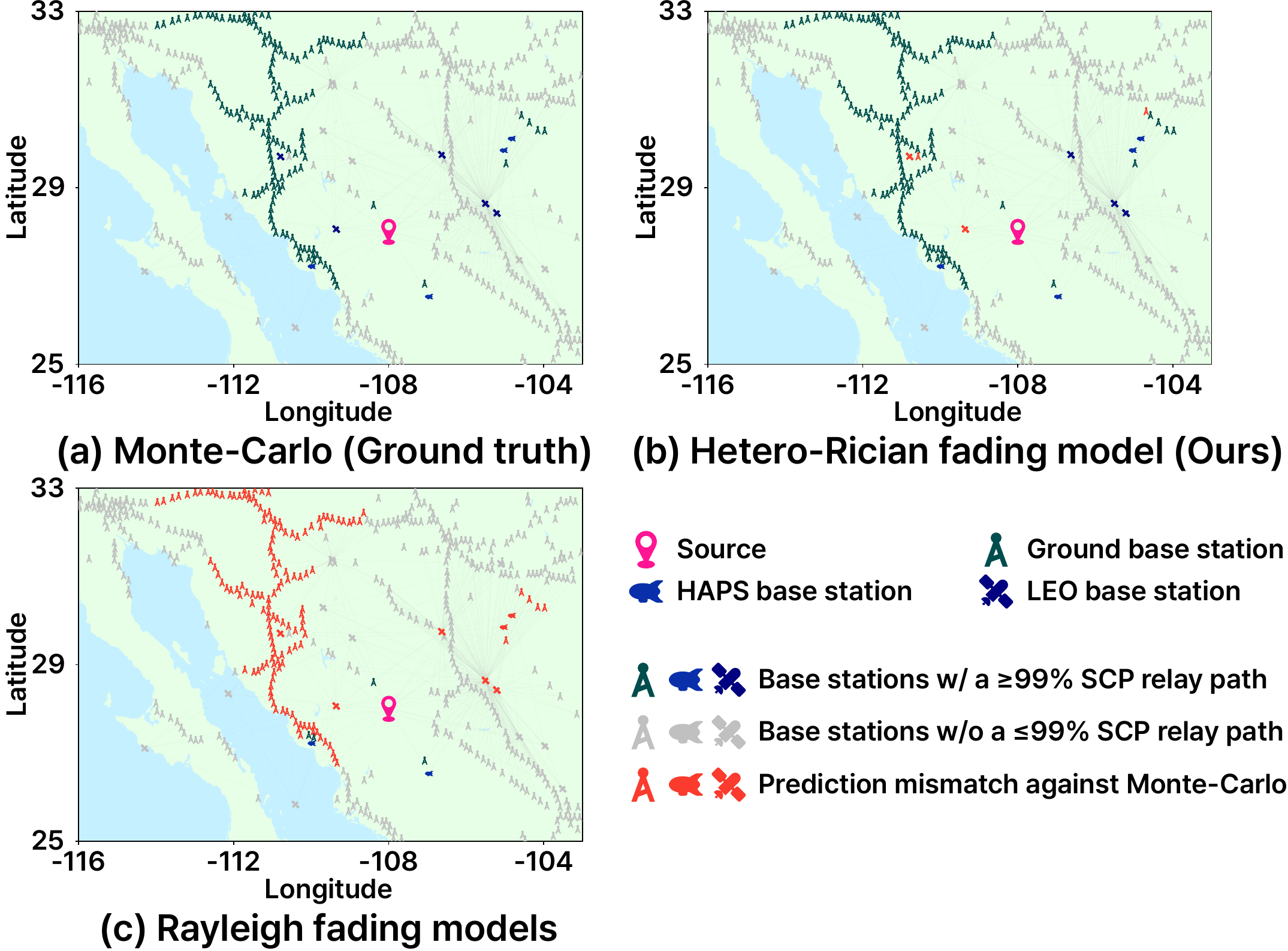}
    \caption{Visualization of the proposed scheme with baselines in the North America SAGIN dataset.}
    \label{fig:visualization_scp99}
\end{figure}
\section{Proof of SCP for Rician fading}
\label{sec:Proof of SCP for Rician fading}
\setcounter{lemma}{0}
\begin{lemma}
    (Spatial Integral) The following equality holds for the integration over the two-dimensional Euclidean plane:
    \begin{equation}
        \int_{\mathbb{R}^2}\hspace{-.1cm}1-
                \frac{\gamma\big(m_l, k(d^{\mathsf{e}}_m)^{\alpha_l}\big)}{\Gamma(m_l)} ~d\bm{p}_m
        =\pi\frac{\Gamma(m_l+\frac{2}{\alpha_l})}{\Gamma(m_l)} k^{-\frac{2}{\alpha_l}}.
    \end{equation}
\end{lemma}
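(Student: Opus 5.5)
The plan is to pass to polar coordinates centred at the transmitting node and then use a tail-integral (layer-cake) argument for a Gamma-distributed random variable. The integrand depends on $\bm{p}_m$ only through $r=d^{\mathsf{e}}_m=\|\bm{p}_m\|$, so $d\bm{p}_m=r\,dr\,d\phi$. The angular integral contributes $2\pi$, and the left-hand side becomes
\begin{equation}
2\pi\int_0^\infty \Big(1-\frac{\gamma(m_l,k r^{\alpha_l})}{\Gamma(m_l)}\Big) r\,dr
=2\pi\int_0^\infty \frac{\Gamma(m_l,k r^{\alpha_l})}{\Gamma(m_l)}\, r\,dr, \nonumber
\end{equation}
where $\Gamma(\cdot,\cdot)$ is the upper incomplete gamma function.

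Next, substitute $u=k r^{\alpha_l}$, so that $r=(u/k)^{1/\alpha_l}$ and $r\,dr=\frac{1}{\alpha_l}k^{-2/\alpha_l}u^{2/\alpha_l-1}\,du$. The integral becomes
\begin{equation}
\frac{2\pi}{\alpha_l}\,\frac{k^{-2/\alpha_l}}{\Gamma(m_l)}\int_0^\infty u^{2/\alpha_l-1}\,\Gamma(m_l,u)\,du. \nonumber
\end{equation}

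The main step is evaluating $\int_0^\infty u^{s-1}\Gamma(m_l,u)\,du$ with $s=2/\alpha_l$. I would write $\Gamma(m_l,u)=\int_u^\infty t^{m_l-1}e^{-t}\,dt$ and swap the order of integration, which Tonelli permits because the integrand is nonnegative. The inner integral is $\int_0^t u^{s-1}\,du=t^s/s$, leaving $\frac{1}{s}\int_0^\infty t^{m_l+s-1}e^{-t}\,dt=\Gamma(m_l+s)/s$. Integration by parts gives the same result, using $\frac{d}{du}\Gamma(m_l,u)=-u^{m_l-1}e^{-u}$. The boundary terms vanish: at $u=0$ because $s>0$ and $\Gamma(m_l,0)$ is finite, and at $u\to\infty$ because of exponential decay. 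Probabilistically, this is the identity $\mathbb{E}[G^s]=\int_0^\infty s u^{s-1}\mathcal{P}(G>u)\,du$ for $G\sim\mathrm{Gamma}(m_l,1)$.

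Putting the pieces together gives $\frac{2\pi}{\alpha_l}\frac{k^{-2/\alpha_l}}{\Gamma(m_l)}\cdot\frac{\alpha_l}{2}\Gamma(m_l+\tfrac{2}{\alpha_l})=\pi\frac{\Gamma(m_l+2/\alpha_l)}{\Gamma(m_l)}k^{-2/\alpha_l}$, which is the claimed expression. The only care needed is checking convergence: this requires $\alpha_l>0$, $k>0$ and $m_l>0$, all of which hold here.
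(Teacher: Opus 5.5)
Your proof is correct and follows essentially the same route as the paper: polar coordinates, a power substitution, then writing the upper incomplete gamma function as an integral and interchanging the order of integration. The only difference is bookkeeping. The paper parametrizes the tail as $\Gamma(m_l,ks)=e^{-ks}(ks)^{m_l}\int_0^\infty e^{-ksu}(1+u)^{m_l-1}\,du$, integrates over $s$ first, and is left with $\int_0^\infty(1+u)^{-1-2/\alpha_l}\,du=\alpha_l/2$. You instead integrate over the lower variable first, getting the factor $\alpha_l/2$ directly from $\int_0^t u^{2/\alpha_l-1}\,du$, and you justify the interchange explicitly via Tonelli.
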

\begin{proof}
\begin{align}
    &\int_{\mathbb{R}^2}\hspace{-.1cm}1-
                \frac{\gamma\big(m_l, k(d^{\mathsf{e}}_m)^{\alpha_l}\big)}{\Gamma(m_l)} ~d\bm{p}_m
    \\
    &=\int_0^{\infty}\int_0^{2\pi}\bigg[1-\frac{\gamma\big(m_l,kr^{\alpha_l}\big)}{\Gamma(m_l)}\bigg]r~d\theta dr
    \\
    &=\frac{2\pi}{\alpha_l}
                \int_0^{\infty}
                    s^{\frac{2}{\alpha_l}-1}
                    \bigg[
                        1-
                        \frac{\gamma(m_l, ks)}{\Gamma(m_l)}
                    \bigg]
                ~ds
    \\
    &=\frac{2\pi}{\alpha_l} \int_0^{\infty}
        s^{\frac{2}{\alpha_l}-1}
        \frac{1}{\Gamma(m_l)}
        \exp(-ks) (ks)^{m_l}
        \nonumber \\
        &\quad\times
        \int_0^{\infty}
            \exp(-ksu)
            (1+u)^{m_l-1}
        ~du~ds
    \\
    &=\frac{2\pi}{\alpha_l} \frac{1}{\Gamma(m_l)}
        k^{m_l}
        \int_0^{\infty} (1+u)^{m_l-1}
        \nonumber \\
        &\quad\times
        \int_0^{\infty}
            s^{m_l+\frac{2}{\alpha_l}-1}
            \exp\big(-k(1+u)s\big)
        ~dsdu
    \\
    &=\frac{2\pi}{\alpha_l} \frac{\Gamma\big(m_l+\frac{2}{\alpha_l}\big)}{\Gamma(m_l)}
        k^{-\frac{2}{\alpha_l}}
        \int_0^{\infty} (u+1)^{-1-\frac{2}{\alpha_l}}~du
    \\
    &=\pi
    \frac{\Gamma\big(m_l+\frac{2}{\alpha_l}\big)}{\Gamma(m_l)}
    k^{-\frac{2}{\alpha_l}}.
\end{align}
\end{proof}

\begin{lemma}
    (Approximation for Marcum Q-function Integral)
    The following approximation holds for large values of $|\hat{a}_l|$:
    \begin{align}
        \hspace{-0.2cm}\int_0^{\infty}
        x^{-\frac{2}{\alpha_l}-1}
        \exp\big(\hspace{-.1cm}-\hspace{-.05cm}\lambda_l& C x^{-\frac{2}{\alpha_l}}\big)
        \big[
        1-Q_1\big(\hat{a}_l, \hat{b}_l\sqrt{x}\big)
        \big]
        ~dx
        \\[-.2cm]
        &~\approx\frac{\alpha_l}{2\lambda_l C}
        \bigg[1-\exp
                \Big(\hspace{-.1cm}-\hspace{-.05cm}\lambda_l C
                    \big(\frac{\hat{b}_l^2}{\hat{a}_l^2}\big)^{\frac{2}{\alpha_l}}
                \Big)
        \bigg].
        \nonumber
    \end{align}
\end{lemma}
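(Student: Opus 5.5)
Set $\beta = 2/\alpha_l$ and $c = \lambda_l C$. The main idea is to replace the CDF $1-Q_1(\hat{a}_l,\hat{b}_l\sqrt{x})$ by a unit step located at the point where the argument $\hat{b}_l\sqrt{x}$ crosses $\hat{a}_l$, namely $x_0 = \hat{a}_l^2/\hat{b}_l^2$.

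The justification is the large-$|\hat{a}_l|$ behaviour of the first-order Marcum Q-function. For large $a$ one has $Q_1(a,b)\approx \tfrac12\operatorname{erfc}\big((b-a)/\sqrt{2}\big)$, which follows from the asymptotic form of $I_0$ inside the defining integral. Thus $Q_1(a,b)$, viewed as a function of $b$, drops from $1$ to $0$ over a window of width $O(1)$ centred at $b=a$. In the variable $x$ this window has relative width $O(1/|\hat{a}_l|)$ around $x_0$. Hence, for large $|\hat{a}_l|$,
\[
1-Q_1(\hat{a}_l,\hat{b}_l\sqrt{x})\approx \mathbf{1}\{x> x_0\}.
\]

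Substituting the step function into the integral gives
\[
\int_{x_0}^{\infty} x^{-\beta-1}\exp(-c\,x^{-\beta})\,dx .
\]
The change of variables $t = x^{-\beta}$, with $dt = -\beta x^{-\beta-1}dx$, turns this into
\[
\frac{1}{\beta}\int_0^{x_0^{-\beta}} e^{-ct}\,dt=\frac{1}{\beta c}\big[1-\exp(-c\,x_0^{-\beta})\big].
\]
Now $x_0^{-\beta} = (\hat{b}_l^2/\hat{a}_l^2)^{2/\alpha_l}$, so this is exactly the claimed right-hand side $\frac{\alpha_l}{2\lambda_l C}\big[1-\exp(-\lambda_l C(\hat{b}_l^2/\hat{a}_l^2)^{2/\alpha_l})\big]$.

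The main obstacle is controlling the error of replacing the smooth transition by a step. The error is
\[
\int_0^\infty g(x)\big[F(x)-\mathbf{1}\{x>x_0\}\big]\,dx, \qquad g(x)=x^{-\beta-1}e^{-cx^{-\beta}},
\]
where $F(x)=1-Q_1(\hat{a}_l,\hat{b}_l\sqrt{x})$. The function $g$ is smooth and bounded near $x_0>0$. The bracket is antisymmetric to first order about $x_0$, because the erfc approximation is symmetric in $b-a$. So a first-order Taylor expansion of $g$ around $x_0$ shows that the leading contribution cancels. The remaining error is $O(g'(x_0)\,x_0^2/\hat{a}_l^2)$ relative to the main term, which vanishes as $|\hat{a}_l|\to\infty$.

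The tails of the bracket need separate bounds. For $x<x_0$, use the Chernoff-type bound $1-Q_1(a,b)\le e^{-(a-b)^2/2}$ for $b<a$. For $x>x_0$, use $Q_1(a,b)\le e^{-(b-a)^2/2}$ for $b>a$. Both bounds give Gaussian decay in $\hat{b}_l\sqrt{x}-\hat{a}_l$, so the tails are negligible.

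Since the statement is only an approximation, I would present the argument at this level. I would make the step-function approximation precise by citing the erfc asymptotic form in \cite{simon2004digital} or \cite{bocus2013approx}, and then carry out the elementary integral explicitly.
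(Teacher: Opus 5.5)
Your argument is correct. It reaches the paper's formula by a genuinely different route.

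\textbf{The paper's route.} The paper works by series. It expands $\exp(-\lambda_l C x^{-2/\alpha_l})$ as a power series and $1-Q_1$ as a Poisson mixture of incomplete gamma functions. Integrating term by term gives factors $\Gamma(1-\frac{2n}{\alpha_l})M(\frac{2n}{\alpha_l},1,-\frac{\hat a_l^2}{2})$. It then uses the large-$|z|$ asymptotic of Kummer's function to replace each factor by $(\hat a_l^2/2)^{-2n/\alpha_l}$, after which the series resums to the exponential.

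\textbf{Your route.} You collapse the Rician CDF to a unit step at $x_0=\hat a_l^2/\hat b_l^2$ and do a single substitution.

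\textbf{The two approximations coincide.} Let $X=R^2/\hat b_l^2$, with $R$ Rician, be the variable whose CDF is $1-Q_1(\hat a_l,\hat b_l\sqrt{x})$. By Tonelli, the left-hand side equals exactly $\frac{\alpha_l}{2\lambda_l C}\big(1-\mathbb{E}[e^{-\lambda_l C X^{-2/\alpha_l}}]\big)$.
\begin{itemize}
\item The paper in effect expands this expectation in the negative moments $\mathbb{E}[X^{-2n/\alpha_l}]=(\hat b_l^2/2)^{2n/\alpha_l}\Gamma(1-\frac{2n}{\alpha_l})M(\frac{2n}{\alpha_l},1,-\frac{\hat a_l^2}{2})$, and approximates each one by $x_0^{-2n/\alpha_l}$.
\item You replace the law of $X$ by a point mass at $x_0$.
\end{itemize}

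\textbf{What each buys.} The paper's route reduces the approximating step to one citable asymptotic formula. It also keeps an intermediate Kummer-function expression from which higher-order corrections could in principle be read off. Your route is more elementary and shows plainly that the lemma amounts to treating the bottleneck SNR as deterministic at $x_0$. It also sidesteps a real weakness of the series approach: those negative moments are infinite once $2n\ge\alpha_l$. In the paper's proof this appears as termwise integrals $\int_0^\infty x^{-\frac{2}{\alpha_l}(n+1)+k+m}e^{-\hat b_l^2x/2}\,dx$ that diverge at $x=0$ for small $k+m$ and are given meaning only by continuing $\Gamma$. Finally, your route puts an actual error estimate within reach, which the paper does not attempt.

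\textbf{Error bookkeeping.} This part of your proposal is slightly loose. The antisymmetry holds in $u=\hat b_l\sqrt{x}-\hat a_l$, not in $x$. As a result, two further terms of order $x_0g(x_0)/\hat a_l^2$ appear alongside your $x_0^2g'(x_0)/\hat a_l^2$:
\begin{itemize}
\item one from the Jacobian $2(\hat a_l+u)/\hat b_l^2$;
\item one from the $O(1/\hat a_l)$ offset of the Rician centre from $\hat a_l$.
\end{itemize}
These are absolute errors, not relative ones. However, $x_0g(x_0)$ and $x_0^2g'(x_0)$ are both bounded multiples of the main term, uniformly in $\lambda_l C x_0^{-2/\alpha_l}$. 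So the total error is still $O(\hat a_l^{-2})$ relative to the main term, and your conclusion stands.
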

\begin{proof}
\begin{align}
    &\int_0^{\infty}
        \hspace{-.2cm}x^{-\frac{2}{\alpha_l}-1}\hspace{-.05cm}
        \exp\hspace{-.05cm}\Big(\hspace{-.1cm}-\hspace{-.05cm}\lambda_l C x^{-\frac{2}{\alpha_l}}\Big)\hspace{-.05cm}
        \Big[
        1\hspace{-.05cm}-\hspace{-.05cm}Q_1\big(\hat{a}_l, \hat{b}_l\sqrt{x}\big)\hspace{-.05cm}
        \Big]\hspace{-.1cm}
    ~dx
    \\
    &=\int_0^{\infty}
        \hspace{-.2cm}x^{-\frac{2}{\alpha_l}-1}
        \sum_{n=0}^{\infty}\frac{1}{n!} \Big(\hspace{-.1cm}-\hspace{-.05cm}\lambda_l C x^{-\frac{2}{\alpha_l}}\Big)^n\hspace{-.1cm}
        \exp\Big(\hspace{-.1cm}-\hspace{-.05cm}\frac{\hat{a}_l^2}{2}\Big)
        \nonumber \\
        &\quad\times
        \sum_{k=0}^{\infty}
            \frac{\gamma\big(1+k, \frac{\hat{b}_l^2x}{2}\big)}{k!\Gamma(1+k)}
            \Big(\frac{\hat{a}_l^2}{2}\Big)^k
        ~dx
    \\
    &=\exp\Big(\hspace{-.1cm}-\hspace{-.05cm}\frac{\hat{a}_l^2}{2}\Big)
        \sum_{n=0}^{\infty}\frac{1}{n!} (-\lambda_l C)^n
        \sum_{k=0}^{\infty}\sum_{m=0}^{\infty}
            \frac{1}{k!}
            \Big(\frac{\hat{a}_l^2}{2}\Big)^k
            \\
            &\quad\times
            \frac{\big(\frac{\hat{b}_l^2}{2}\big)^{1+k+m}}{\Gamma(k+m+2)}
            \hspace{-.1cm}\int_0^{\infty}\hspace{-.2cm}
                x^{-\frac{2}{\alpha_l}(n+1)+k+m}
                \exp\Big(\hspace{-.1cm}-\hspace{-.05cm}\frac{\hat{b}_l^2x}{2}\Big)
            ~dx
    \nonumber \\
    &=\exp\Big(\hspace{-.1cm}-\hspace{-.05cm}\frac{\hat{a}_l^2}{2}\Big)
        \Big(\frac{\hat{b}_l^2}{2}\Big)^{\frac{2}{\alpha_l}}
        \sum_{n=0}^{\infty}
            \frac{1}{n!}
            \bigg(\hspace{-.1cm}-\hspace{-.05cm}\lambda_l C \Big(\frac{\hat{b}_l^2}{2}\Big)^{\frac{2}{\alpha_l}}\bigg)^n
        \nonumber \\
        &\quad\times
        \sum_{k=0}^{\infty}
            \frac{1}{k!}
            \Big(\frac{\hat{a}_l^2}{2}\Big)^k
            \sum_{m=0}^{\infty}
                \frac{\Gamma\big(k+m+1-\frac{2}{\alpha_l}(n+1)\big)}{\Gamma(k+m+2)}
    \\
    &=\frac{\alpha_l}{2}
        \Big(\frac{\hat{b}_l^2}{2}\Big)^{\frac{2}{\alpha_l}}
        \sum_{n=0}^{\infty}
            \frac{1}{n!}
            \bigg(\hspace{-.1cm}-\hspace{-.05cm}\lambda_l C \Big(\frac{\hat{b}_l^2}{2}\Big)^{\frac{2}{\alpha_l}}\bigg)^n
            \nonumber \\
            &\quad\times
            \Gamma\Big(1-\frac{2}{\alpha_l}(n+1)\Big)
            M\Big(\frac{2}{\alpha_l}(n+1),1,-\frac{\hat{a}_l^2}{2}\Big)
    \\
    &=\frac{\alpha_l}{2\lambda_l C}
        \Bigg[1-
            \sum_{n=0}^{\infty}
                \frac{1}{n!}
                \bigg(\hspace{-.1cm}-\hspace{-.05cm}\lambda_l C \Big(\frac{\hat{b}_l^2}{2}\Big)^{\frac{2}{\alpha_l}}\bigg)^n
                \nonumber \\
                &\quad\times
                \Gamma\Big(1-\frac{2n}{\alpha_l}\Big)
                M\Big(\frac{2n}{\alpha_l},1,-\frac{\hat{a}_l^2}{2}\Big)
        \Bigg]
    \\
    &\approx \frac{\alpha_l}{2\lambda_l C}
        \Bigg[1-
            \sum_{n=0}^{\infty}
                \frac{1}{n!}
                \bigg(\hspace{-.1cm}-\hspace{-.05cm} \lambda_l C
                    \Big(\frac{\hat{a}_l^2}{2}\Big)^{-\frac{2}{\alpha_l}}
                    \Big(\frac{\hat{b}_l^2}{2}\Big)^{\frac{2}{\alpha_l}}
                \big)^n
        \Bigg]
    \label{eq:lemma2_approx}
    \\
    &=\frac{\alpha_l}{2\lambda_l C}
        \Bigg[1-\exp
                \bigg(\hspace{-.1cm}-\hspace{-.05cm}\lambda_l C
                    \Big(\frac{\hat{b}_l^2}{\hat{a}_l^2}\Big)^{\frac{2}{\alpha_l}}
                \bigg)
        \Bigg],
\end{align}
where $M(\cdot,\cdot,\cdot)$ is the confluent hypergeometric function.
Using the asymptotic expansion $M(a,b,z) \approx \frac{\Gamma(b)}{\Gamma(b-a)} (-z)^{-a}+\frac{\Gamma(b)}{\Gamma(a)}e^zz^{a-b}$ for large $|z|$ from \cite[Eq.~(13.5.1)]{abramowitz1965handbook}, and noting that the second term vanishes for $\Re(z)<0$, we obtain \eqref{eq:lemma2_approx}.
\end{proof}

\section{Proof of SCP for Rayleigh fading}
\label{sec:Proof of SCP for Rayleigh fading}
We derive the SCP probability for the Rayleigh channel.
With a property that a sum of independent and identically distributed exponential random variables follows an Erlang distribution, \eqref{eq:SCP_layer_3} is derived as
\begin{align}
    &\hspace{-.15cm}\mathbb{E}_{|h^\mathsf{s}_i|}\hspace{-.05cm}
    \Bigg[\hspace{-.05cm}
        \exp\hspace{-.05cm}\bigg[
            \hspace{-.1cm}-\hspace{-.05cm}\lambda_l\hspace{-.1cm}\int_{\mathbb{R}^2}\hspace{-.15cm}\mathcal{P}
            \bigg(\hspace{-.05cm}
                \frac{n_0(d^{\mathsf{e}}_m)^{\alpha_l}}{P_l} \epsilon_l
                \hspace{-.05cm}<\hspace{-.05cm}
                \sum_{i\in\mathcal{I}_l} |h^{\mathsf{e}}_{(i,m)}|^2\hspace{-.05cm}
            \bigg)\hspace{-.1cm} ~d\bm{p}_m
        \bigg]\hspace{-.1cm}
    \Bigg]
    \label{eq:SCP_layer_rayleigh_1}
    \\
    &\hspace{-.25cm}=\hspace{-.05cm}\mathbb{E}_{|h^\mathsf{s}_i|}\hspace{-.05cm}
    \Bigg[\hspace{-.05cm}
        \exp\hspace{-.05cm}\bigg[
            \hspace{-.1cm}-\hspace{-.05cm}\lambda_l\hspace{-.1cm}\int_{\mathbb{R}^2}\hspace{-.15cm}\mathcal{P}
            \bigg(\hspace{-.05cm}
                \sum_{i\in\mathcal{I}_l} \frac{1}{i!} \exp(-K \epsilon_l) (K \epsilon_l)^i\hspace{-.05cm}
            \bigg)\hspace{-.1cm} ~d\bm{p}_m
        \bigg]\hspace{-.1cm}
    \Bigg]
    \label{eq:SCP_layer_rayleigh_2}
    \\
    &\hspace{-.25cm}\geq
    \exp
    \Bigg[\hspace{-0.05cm}
        -\hspace{-0.05cm}\lambda_l \hspace{-0.1cm}
        \int_{\mathbb{R}^2}\hspace{-0.05cm}
        \mathbb{E}_{|h^\mathsf{s}_i|}
        \bigg[
            \sum_{i\in\mathcal{I}_l} \frac{1}{i!} \exp(-K \epsilon_l) (K \epsilon_l)^i
        \bigg]
        ~d\bm{p}_m
    \Bigg]
    \label{eq:SCP_layer_rayleigh_3}
    \\
    &\hspace{-.25cm}=
    \exp \hspace{-0.05cm}
    \Bigg[ \hspace{-0.05cm}
        -\hspace{-0.05cm}\lambda_l \hspace{-0.1cm}
        \int_{\mathbb{R}^2} \hspace{-0.05cm}
            1\hspace{-0.05cm}-\hspace{-0.05cm}
            \bigg(
                \frac{(d^{\mathsf{e}}_m)^{\alpha_l}}
                    {(d^{\mathsf{e}}_m)^{\alpha_l}
                    \hspace{-0.05cm}+\hspace{-0.05cm}
                    \sum_{i\in\mathcal{I}_l} \hspace{-0.05cm}
                    (d^\mathsf{s}_{i})^{\alpha_l}}
            \bigg)^{I_l} \hspace{-0.15cm}
        ~d\bm{p}_m
    \Bigg]
    \label{eq:SCP_layer_rayleigh_4}
    \\
    &\hspace{-.25cm}=
        \exp \hspace{-0.05cm}
        \bigg[ \hspace{-0.05cm}
           -\kappa_l
           \Big(
                \sum_{i\in\mathcal{I}_l} (d_{i}^{\mathsf{s}})^{\alpha_l}
           \Big)^{\frac{2}{\alpha_l}}
        \bigg],
    \label{eq:SCP_layer_rayleigh_5}
\end{align}
where $K=\frac{n_0(d^{\mathsf{e}}_m)^{\alpha_l}}{P_l}$ and $\kappa_l = \pi \lambda_l \frac{\Gamma\big(1-\frac{2}{\alpha_l}\big) \Gamma\big(I_l+\frac{2}{\alpha_l}\big)}{\Gamma(I_l)}$.
\eqref{eq:SCP_layer_rayleigh_3} is obtained by applying Jensen's inequality.
\eqref{eq:SCP_layer_rayleigh_4} is derived from Lemma~\ref{lemma3}, and \eqref{eq:SCP_layer_rayleigh_5} is from \cite[Eq.~(26)]{Yao16-TCOM}.

Substituting \eqref{eq:SCP_layer_rayleigh_5} into $\mathcal{P}=\prod_{l\in\mathcal{L}}\mathcal{P}_l$ gives the SCP probability for Rayleigh channel as
\begin{align}
    \mathcal{P}=\exp\bigg[
        -\sum_{l\in\mathcal{L}}\kappa_l\Big(
            \sum_{i\in\mathcal{I}_l} (d^\mathsf{s}_{i})^{\alpha_l}
        \Big)^{\frac{2}{\alpha_l}}
    \bigg],
\end{align}
where $\kappa_l = \pi \lambda_l \frac{\Gamma\big(1-\frac{2}{\alpha_l}\big) \Gamma\big(I_l+\frac{2}{\alpha_l}\big)}{\Gamma(I_l)}$.

\begin{lemma}
    Let $X=\min_{i=1,...,N} X_i$ for independent exponential random variables $X_i\sim\exp(\lambda_i)$ for $i=1,...,N$.
    Then,
    \begin{equation}
        \hspace{-.15cm}\mathbb{E}_{X}\hspace{-.05cm}
        \bigg[
            \sum_{i=0}^{N-1}\frac{1}{i!} \exp(-C X) (C X)^i
        \bigg]
        \hspace{-.05cm}= 1\hspace{-.05cm}-\hspace{-.05cm}
        \bigg(\hspace{-.05cm}
            \frac{C}{C+\sum_{i=1}^{N}\lambda_i}\hspace{-.05cm}
        \bigg)^{\hspace{-.1cm}N}\!.
    \label{eq:lemma3}
    \end{equation}
    \label{lemma3}
\end{lemma}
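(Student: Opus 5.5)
The plan is to reduce everything to a single exponential random variable and then evaluate the expectation term by term. The first step uses the standard fact that the minimum of independent exponentials is exponential. Since $\mathcal{P}(X>x)=\prod_{i=1}^N \mathcal{P}(X_i>x)=\exp(-\Lambda x)$ with $\Lambda=\sum_{i=1}^N\lambda_i$, we have $X\sim\exp(\Lambda)$ with density $\Lambda e^{-\Lambda x}$ on $[0,\infty)$. Here I interpret $\lambda_i$ as the rate parameters, which matches the right-hand side of \eqref{eq:lemma3}.

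Next I would exchange expectation and the finite sum, which is harmless because there are only $N$ nonnegative terms. Each term is then a gamma integral:
\begin{equation}
\mathbb{E}\Big[\frac{1}{i!}e^{-CX}(CX)^i\Big]=\frac{C^i\Lambda}{i!}\int_0^\infty x^i e^{-(C+\Lambda)x}\,dx=\frac{\Lambda}{C+\Lambda}\Big(\frac{C}{C+\Lambda}\Big)^{i}.
\nonumber
\end{equation}
This uses $\int_0^\infty x^i e^{-sx}dx=i!/s^{i+1}$ for $s=C+\Lambda>0$, assuming $C\ge 0$ as in its use with $K\epsilon_l$.

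Finally, set $r=C/(C+\Lambda)$, so that $1-r=\Lambda/(C+\Lambda)$. Summing over $i=0,\dots,N-1$ gives a finite geometric series:
\begin{equation}
(1-r)\sum_{i=0}^{N-1}r^i = 1-r^N = 1-\Big(\frac{C}{C+\sum_{i=1}^N\lambda_i}\Big)^{N},
\nonumber
\end{equation}
which is exactly \eqref{eq:lemma3}.

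There is no real obstacle in this argument. The only points that need care are that the exponentials are parametrized by rate rather than by mean, and that the degenerate case $C=0$ is consistent: the left side then reduces to the $i=0$ term, which equals $1$, and the right side is $1-0^N=1$.
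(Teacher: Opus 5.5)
Your proof is correct and follows the paper's argument step for step: reduce $X$ to a single exponential with rate $\sum_{i=1}^N\lambda_i$, evaluate each term as a gamma integral, and sum the resulting finite geometric series. Your remarks on the rate parametrization and the degenerate case $C=0$ are valid additions that the paper's proof does not make.
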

\begin{proof}
    Noting that $X$ is exponentially distributed with rate $\lambda=\sum_{i=1}^N \lambda_i$, we can derive \eqref{eq:lemma3} as follows:
    \begin{align}
        &\mathbb{E}_{X}
        \Big[
            \sum_{i=0}^{N-1}\frac{1}{i!} \exp(-C X) (C X)^i
        \Big]
        \\
        &\hspace{-.1cm}=\int_0^{\infty} \sum_{i=0}^{N-1}\frac{1}{i!} \exp(-C x) (C x)^i \lambda \exp(-\lambda x)~dx
        \\
        &\hspace{-.1cm}=\lambda \sum_{i=0}^{N-1} \frac{C^i}{i!} \int_0^{\infty} x^i \exp\hspace{-.05cm}\big(\hspace{-.05cm}-\hspace{-.05cm}(C+\lambda)x\big)~dx
        \\
        &\hspace{-.1cm}=\lambda \sum_{i=0}^{N-1} \frac{C^i}{i!} \cdot \frac{\Gamma(i+1)}{(C+\lambda)^{i+1}}
        \\
        &\hspace{-.1cm}=\frac{\lambda}{C+\lambda} \sum_{i=0}^{N-1} \Big( \frac{C}{C+\lambda} \Big)^i
        \\
        &\hspace{-.1cm}=1- \Big( \frac{C}{C+\lambda} \Big)^{N}.
    \end{align}
    As a result, plugging in $N=I_l$, $C=\frac{n_0(d^{\mathsf{e}}_m)^{\alpha_l}}{P_l}$ and $\lambda=\sum_{i\in\mathcal{I}_l} \frac{n_0(d^\mathsf{s}_{i})^{\alpha_i}}{P_l}$ leads to \eqref{eq:SCP_layer_rayleigh_4}.
\end{proof}

\end{appendices}

\end{document}